\newtheorem{theorem}{Theorem}
\newtheorem{corollary}[theorem]{Corollary}
\newtheorem{definition}[theorem]{Definition}
\newtheorem{lemma}[theorem]{Lemma}
\newenvironment{proof}[1][Proof]{\noindent\textbf{#1.} }{\ \rule{0.5em}{0.5em}}
\begin{document}

\begin{center}
{\LARGE The outcome of the restabilization process in matching markets  }%

 %Quasi-stable matchings in matching markets
%\footnote{%
%This work is partially supported by the National University of San Luis,
%through grant 31012, and by the National Scientific and Technical Research Council (CONICET), through grant PIP
%112-200801-00464.\medskip}{\LARGE \bigskip }
%{\LARGE \bigskip \bigskip }

\bigskip 

\bigskip

Mill\'{a}n Beatriz\footnote{%
San Luis Institute of Applied Mathematics (IMASL). National University of San Luis (UNSL) and the National Scientific and Technical Research Council (CONICET). Avenida Italia 1554. C.P.5700, San Luis, Argentina.
\par
Department of Mathematics, National University of San Juan (UNSJ). Av.
Ignacio de la Roza 230 (O). C.P.5400, San Juan,  Argentina. 

\par
e-mail: millanbetty2@gmail.com} {\LARGE \bigskip \bigskip }
\end{center}

{\small  Abstract: For a many-to-one matching model, we study the matchings obtained through the restabilization of stable matchings that had been disrupted by a change in the population. We include a simple representation of the stable matching obtained in terms of the initial stable matching (i.e., before being disrupted by changes in the population) and the firm-optimal stable matching\footnote{We used Lattice Theory to characterize the outcome of the restabilization process.}. We also describe the connection between the original stable matching and the one obtained after the restabilization process in the new market.

%the stable matching in the original (pre-job creation and retire-
%ment) market with the stable matching achieved in the new market.

 }

\medskip

{\small Keywords: Stable matchings, Restabilization, Lattice.}

\medskip

{\small Subject Classification: C78, J73 }

\section{Introduction}

This  paper  studies  the  matchings  obtained through  the restabilization process of disrupted     many-to-one stable  matchings after a change in the population.   We provide various   characterizations of the obtained  stable  matchings based on  some  lattice-theoretic  results. We show that the outcome the restabilization process  is the meet of the set of the stable matchings unanimously preferred by workers to the initial stable matching.  Whenever the disruption is due to the retirement of some workers or the entry of new firms, the set of stable matchings unanimously preferred by workers to the initial stable matching is a nonempty sublattice of the set of stable matchings.  Thus it contains the matching meet, which is the worker-worst stable matching. Note that this matching is the outcome of the restabilization process.

%We show that this stable matching is the output del proceso de restabilización. 

\medskip

%Blum et al. (1997) extends the theory of two-sided matching models in a  framework which can be applied to senior level (and decentralized) labor markets. 
Blum et al. (1997) study in a one-to-one model the restabilization process triggered by the disruption of a pairwise stable matching due to the retirement of some workers or the creation of firms. Such a process leads to vacancy chains, since as one firm succeeds in filling its vacancy it may cause another firm to have one. In these cases they show that their modified version of the  Deferred Acceptance (DA) algorithm introduced by Gale and Shapley (1962) always reaches a pairwise stable matching. Furthermore, the stable matching which results is completely determined by the preferences of the agents, together with the particular firm quasi-stable matching at which the process starts. Firm quasi-stable matchings emerge as natural states in some markets, for example, in senior level labor markets, positions typically become available when current incumbents retire;  the resulting empty positions are often filled with candidates who are incumbents elsewhere.
%(which is in turn determined by the initial stable matching and by which workers have retired).

\medskip

David Cantala (2003) studies, in many-to-one matching markets, the restabilization process of a stable matching disrupted by a change in the population, extending that  way the work of Blum et al. (1997), as he considers   that firms may hire many workers. He designs  the Set Offering (SO) Algorithm so as to mimic the restabilization process of a decentralized market, which always leads to a stable matching whenever the disruption is due to the opening of positions or the retirement of workers. When the disruption is due to the entrance of workers or the closure of positions, he constructs another algorithm which produces a stable matching. In this algorithm, unemployed workers make offers to firms.

\medskip

%\textcolor{red}{ Wu and Roth (2018) show that the set of firm-quasi stable matchings forms a lattice and the set of stable matchings equals the set of fixed points of a Tarski operator on this lattice.  \footnote{\textcolor{red}{ They llaman  envy-free matchings a la  generalize the “simple” matchings studied by Sotomayor(1996) and the “firm-quasi stable” matchings studied by Blum et al.(1997)}}.  Blum et al.(1997), even though they did not study the underlying lattice structure, they obtained some lattice-theoretic results on the one-to-one marriage model, some of which are extended to many-to-one settings in Cantala (2011) for firms with responsive preferences. We further investigate and generalize some of their conclusions in many-to-one markets with substitutable and $q$-separable preferences. Our extension of the analysis is two-fold. First, we prove some technical results about the set of firm quasi-stable matchings that concern the lattice operators. These results are key for many results of this paper. Second, we characterize the outcome matching of the restabilization process, usando results  about   the relation of the firm quasi-stable input matching for the SO Algorithm and the corresponding output. Third, an implication of the characterization is  que permite to explore the connection between the original stable matching and the one obtained after the restabilization in the new market.}

Wu and Roth (2018) show that the set of firm-quasi stable matchings forms
a lattice; and the set of stable matchings equals the set of fixed points of a Tarski operator on this lattice.\footnote{They use the term envy-free matchings to generalize the “simple” matchings studied by Sotomayor (1996) and the “firm quasi-stable” matchings studied by Blum et al. (1997).} Even though Blum et al. (1997) did not study the underlying lattice structure, they obtained some lattice-theoretic results on the one-to-one marriage model, some of which are extended to many-to-one settings in Cantala (2011) for firms with responsive preferences. We further investigate and generalize some of their conclusions in many-to-one model with substitutable and q-separable preferences. The extension of our analysis is two-fold. Firstly, we prove some technical results regarding the set of firm quasi-stable matchings that concern the lattice operators. These results are key for many results of this paper. Secondly, we characterize the outcome matching of the restabilization process, using results from the relation of the firm quasi-stable input matching for the SO Algorithm and the corresponding output. Observe that an implication of this characterization is that it allows us to explore the connection between the original stable matching and the one obtained after the restabilization in the new market.

%For rms with responsive preferences, we prove that the set of stable matchings unanimously preferred by workers to any rm quasi-stable matching has a lattice structure. This follows from a generalization of the Decomposition Lemma. The result does not hold when rms have q-substitutable preferences. 
%with the help of a Tarski operator and provide some general algebraic insights.
%In the
%current section we use results of Section 4, about the relations of the firm-
%quasi-stable input matching for the DA Algorithm and the corresponding
%output, to explore the connection between the original stable matching and
%the one obtained after the re-equilibration in the new market.
%in terms of the ``original'' stable matching of the market
%we study the disruption of a stable matching also due to the entrance on the market of workers or the closure of positions. Second,
%to explore the connection between the original stable matching and
%the one obtained after the restabilization in the new market.

%However we will show that the set of envy-free matchings forms a lattice that does not exhibit this kind of duality between the join and the meet. It is worth mentioning that, in Blum et al.(1997), even though they did not study the underlying lattice structure, they obtained some lattice-theoretic results on the one-to-one marriage model through the method of modifying preferences, some of which are extended to many-to-one settings in Cantala(2011).
%several characterizations of such outputs.

\medskip
In the next section we describe  the formal
matching model, and reviews some results on stable matchings. Section 3 introduces firm quasi-stable matchings, and we  prove  some  technical  results. Section 4 studies restabilization after change in the population. Section 5 explores  the connection between the original stable matching and the one obtained after the restabilization process in the new market.

\section{Preliminaries}

The many-to-one bilateral matching market possesses two disjoint sets of agents (two-sided many-to-one matching model), the $n$ \textit{firms} $\mathcal{F}$ set and the $m$ \textit{workers} $\mathcal{W}$ set.  Each firm $f\in \mathcal{F}$ has a strict, transitive, and complete preference relation $P\left( f\right) $ over the set of all $\mathcal{W}$ subsets, and each worker has a strict, transitive, and complete preference relation $P\left(w\right) $ over $\mathcal{F}\cup \emptyset $.   {Preferences profiles} are $\left( n+m\right) $-tuples of preference relations and they are represented by $\mathcal{P}=\left(P(f_{1}),...,P(f_{n});P(w_{1}),...,P(w_{m})\right) $. Given a $\mathcal{P}$ preference profile, then the many-to-one bilateral matching market is the triplet $(\mathcal{F},\mathcal{W},\mathcal{P})$.

\medskip

The assignment problem consists of matching workers with firms maintaining the bilateral nature of their relationship and allowing for the possibility that both, firms and workers, may remain unmatched. Formally,

\begin{definition}
A \textit{matching} $\mu $ is a mapping from the set $\mathcal{F}\cup \mathcal{W}$ into the set of all subsets of $\mathcal{F}\cup \mathcal{W}$ such that for all $w\in \mathcal{W}$ and $F\in \mathcal{F}$:

\begin{enumerate}
\item Either $|\mu \left( w\right) |=1$ and $\mu \left( w\right) \subseteq \mathcal{F}$ or else $\mu \left( w\right) =\emptyset .$

\item $\mu \left( F\right) \in 2^{\mathcal{W}}$.

\item $\mu \left( w\right) =f$ if and only if $w\in \mu \left( f\right) .$\footnote{We will often abuse notation by omitting the brackets to denote a set with a unique element. For instance here, we write $\mu \left( w\right) =f$ instead of $\mu \left( w\right) =\left\{ f\right\} $.}
\end{enumerate}
\end{definition}

Criterion $1$ indicates that a worker is either matched to a firm or remains single. Criterion  $2$ shows that a firm is either matched to a subset of workers or remains single. Lastly, criterion $3$ states that the relationship is reciprocal.  $\mathcal{M}(\mathcal{F},\mathcal{W},\mathcal{P})$ denotes all of the possible matchings in $(\mathcal{F},\mathcal{W},\mathcal{P})$.

\medskip

We are following the convention of extending preferences from the original sets $(2^\mathcal{W}$ and $\mathcal{F} \cup \emptyset)$ to the set of matchings. However, we now have to consider weak preference relations since two matchings  may associate to an agent the same partner. These preference relations will be denoted by $R(f)$ and $R(w)$. For instance, to say that all firms prefer $\mu_1$ to another matching  $\mu_2$ means that for every $f\in  \mathcal{F}$ we have that $\mu_1(f) R(f) \mu_2(f)$ 
(that is, either $\mu_1(f) =\mu_2(f)$ or else $\mu_1(f) P(f) \mu_2(f)$). 

\medskip

We define the \textit{unanimous} partial orders  $\succeq_{\mathcal{F}}$ and $\succeq _{\mathcal{W}}$ in $\mathcal{M}(\mathcal{F},\mathcal{W},\mathcal{P})$ as follows:    
\begin{center}
$\mu _{1}\succeq _{\mathcal{F}}\mu _{2}\Leftrightarrow \mu_1(f) R(f) \mu_2(f)$  for all $f\in \mathcal{F}$ \\
$\mu _{1}\succeq _{\mathcal{W}}\mu _{2}\Leftrightarrow \mu_1(w) R(w) \mu_2(w)$  for all  $w\in \mathcal{W}$ 
\end{center}
We sometimes add superscripts and write, for example $\succeq_{\mathcal{F}}^{\mathcal{P}} $ o  $\succeq_{\mathcal{W}}^{\mathcal{P}} $  to emphasize dependence on particular preferences.

\medskip
Given a preference relation of a firm $P\left( f\right) $ the subsets of workers preferred to the empty set by $f$ are called acceptable. Therefore, we are allowing that firm $f$ may prefer not hiring any workers rather than hiring unacceptable subsets of workers. Similarly, given a preference relation of a $P\left( w\right) $ worker, the firms preferred by $w$ to the empty set are called acceptable. In this case we are allowing that worker $w$ may prefer to remain unemployed rather than working for an unacceptable firm. A pair  $(S,f)$ with $S\subseteq \mathcal{W}$ and $f\in \mathcal{F}$ is called acceptable coalition if $f$ is acceptable all workers $w$ in $S$  and $S$ is   acceptable for $f$. We denote by $A(\mathcal{F},\mathcal{W},\mathcal{P})$ the set of all coalitions acceptable  of market $(\mathcal{F},\mathcal{W},\mathcal{P})$.
Given a $S\subseteq \mathcal{W}$ set, let $Ch\left( S,P\left( f\right) \right) $ denote firm $f$'s most-preferred subset of $S$ according to its preference ordering $P\left(f\right) $ and we refer to this set as  \textit{choice}. 
%Let $P$ be a preference profile. Given a set $S\subseteq \mathcal{W}$, let $Ch\left( S,P\left( F\right) \right) $ denote firm $F$'s \textit{most-preferred} subset of $S$ according to its preference ordering %$P\left(F\right) $
A matching $\mu $ is \textit{blocked by a worker }$w$ if $%
\emptyset P\left( w\right) \mu \left( w\right) $; that is, worker $w$ prefers being unemployed rather than working for firm $\mu \left( w\right) $. Similarly, $\mu $ is \textit{blocked by a} \textit{firm} $f$ if $\mu
\left( f\right) \neq Ch\left( \mu \left( f\right) ,P\left( f\right) \right) $. We say that a matching is \textit{individually rational} if it is not blocked by any individual agent. A matching $\mu $ is \textit{blocked by} \textit{a worker-firm pair }$\left( w,f\right) $ if $w\notin \mu \left(f\right) $, $w\in Ch\left( \mu \left( f\right) \cup \left\{ w\right\},P\left( F\right) \right) $, and $fP\left( w\right) \mu \left( w\right) $; that is, if they are not matched through $\mu $, firm $f$ wants to hire $w$, and worker $w$ prefers firm $F$ rather than firm $\mu \left( w\right) $.

\begin{definition}
A matching $\mu $ is \textbf{stable} if it is not blocked by any individual agent or any firm-worker pair.
\end{definition}

We denote by $S(\mathcal{F},\mathcal{W},\mathcal{P})$ the set of stable matchings of market $(\mathcal{F},\mathcal{W},\mathcal{P})$. There are preference profiles in which the set of stable matchings is empty. These examples share the feature that at least one firm regards a subset of workers as  complements. This is the reason why the literature has focused on the restriction where workers are regarded as substitutes. 
The objective of substitutability condition is to make the hiring of a worker independent of the hiring of other workers. 
%in the sense that firms continue to want to employ a worker even if other workers become unavailable.
\footnote{Kelso and Crawford (1982) were the first to use this property (under the name of ``gross substitutability condition'') in a cardinal matching model with salaries.}

\begin{definition}
A firm $f$'s preference relation $P\left( f\right) $ satisfies \textbf{substitutability} if for any set $S$ containing workers $w$ and $\bar{w}$ $ \left( w\neq \bar{w}\right) $, if $w\in Ch\left( S,P\left( f\right) \right) $ then $w\in Ch\left( S\backslash \left\{ \bar{w}\right\} ,P\left( f\right)
\right) $.
\end{definition}

A preference profile $\mathcal{P}$ is \textit{substitutable} if for each firm $f$, the preference relation $P\left( f\right) $ satisfies substitutability. Kelso and Crawford (1982) shows that if all firms have substitutable preferences then the set of stable matchings is non-empty, and  firms unanimously agree that a stable matching $\mu_\mathcal{F}$ is the best stable matching. Roth (1984) extends these results and shows that if all firms have substitutable
preferences then workers unanimously agree that a stable matching $\mu_\mathcal{W}$ is the best stable matching, and  the optimal stable matching for one side is the worst stable matching for the other side. That is, $S(\mathcal{F},\mathcal{W},\mathcal{P})\neq \emptyset$; and for all $\mu\in(\mathcal{F},\mathcal{W},\mathcal{P})$ we have that $\mu_\mathcal{F}  \succeq_\mathcal{F} \mu  \succeq_\mathcal{F} \mu_\mathcal{W} $ and $\mu_\mathcal{W}  \succeq_\mathcal{W} \mu  \succeq_\mathcal{W} \mu_\mathcal{F}$.

\medskip

We will assume that firms' preferences satisfy a further restriction called $q-$separability.\footnote{See Mart\'{\i}nez et al. (2000) and (2001) for a detailed discussion of this restriction.} This is based on two ideas. First, separability, which says that the division between \textit{good} workers ($wP\left( f\right) \emptyset $) and \textit{bad} workers ($\emptyset P\left(f\right) w$) guides the ordering of subsets in the sense that adding a good worker leads to a \textit{better} set, while adding a bad worker leads to a \textit{worse} set. Second, each firm $f$ has in addition a maximum number of positions to be filled:\ its quota $q_{f}$. This limitation may arise from, for example, technological, legal, or budgetary reasons.
%Since we are interested in stable matchings we incorporate it in the preference ordering of the firm. Therefore, even if the number of good workers for firm $F$ is larger than its quota $q_{F}$, all sets of workers with cardinality strictly larger than $q_{F}$ will be unacceptable.
Formally,

\begin{definition}
A firm $f$'s preference relation $P(f)$  is $q_{f}\mathbf{-}$\textbf{separable} if: (a) for all $S\subsetneq \mathcal{W}\,$ such that $\left| S\right| <q_{f}$ and $w\notin S$ we have that $\left(S\cup \{w\}\right) P(f)S$ if and only if $wP(f)\emptyset $, and (b) $\emptyset P\left( f\right) S$ for all $S$ such that $\left| S\right| >q_{f}$.
\end{definition}

We will denote by $q=\left( q_{f}\right)_{f\in \mathcal{F}}$ the list of quotas and we will say that a preference profile $\mathcal{P}$ is $q-$\textit{separable }if each $P(f)$ is $q_{f}-$separable.

\medskip

As we study the properties of firm quasi-stable matchings, it will be useful to recall the following properties of stable matchings. From now on we will assume that firms have $q$-separable \textit{and}
substitutable preferences.  Mart\'{\i}nez et al (2001) establishes the fact that, under these assumptions, the set of stable matching has a lattice structure.
Given matchings $\mu _{1}$ and $\mu _{2}$, \textit{only }asking each worker to select the best  firm matched with them through $\mu _{1}$ and $\mu _{2}$.  In this way, we define the pointing function  $\mu _{1}\underline{\vee }_{\mathcal{W}}\mu _{2}$ on $\mathcal{F}\cup \mathcal{W}$ by: 
 
\medskip

$\mu_1 \veebar_\mathcal{W} \mu_2(w)=\left\{ \begin{tabular}{l} $\mu_1(w)$ \ \  if  \ \  $\mu_1(w) P(w) \mu_2(w)$ \\
 $\mu_2(w)$ \ \ \ \ \  otherwise \end{tabular} \right.$ for all $w\in \mathcal{W}$ and 

\medskip

$\mu_1\veebar_\mathcal{W} \mu_2(f)=\{w: \mu_1\veebar_W \mu_2(w)=f\}$  \ for all $f \in \mathcal{F}$.

\medskip

\noindent Symmetrically, define the pointing function $\mu _{1}\underline{\wedge }_{\mathcal{W}}\mu _{2}$ on $\mathcal{F}\cup \mathcal{W}$ by matching each worker with their worst firm and each firm with the corresponding set of workers that selected it, if any.

\begin{theorem}\label{reti}
Let $\mathcal{P}$ be a profile of substitutable and $q-$separable preferences. Then, $(S(\mathcal{F},\mathcal{W},\mathcal{P}),\succeq_\mathcal{W}, \wedge, \vee)$ is a lattice, where $\wedge=\barwedge_{\mathcal{W}}$ and $\vee= \veebar_{\mathcal{W}}$.
\end{theorem}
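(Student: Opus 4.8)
The plan is to verify the three lattice requirements in turn: that $\succeq_{\mathcal{W}}$ is a partial order on $S(\mathcal{F},\mathcal{W},\mathcal{P})$, that $\mu_1\veebar_{\mathcal{W}}\mu_2$ is the least upper bound of $\mu_1,\mu_2$, and that $\mu_1\barwedge_{\mathcal{W}}\mu_2$ is their greatest lower bound. The partial-order claim is routine: reflexivity and transitivity of $\succeq_{\mathcal{W}}$ descend pointwise from each weak order $R(w)$, while antisymmetry uses strictness of $P(w)$, for if $\mu_1\succeq_{\mathcal{W}}\mu_2$ and $\mu_2\succeq_{\mathcal{W}}\mu_1$ then $\mu_1(w)=\mu_2(w)$ for every worker, and by reciprocity (condition~3 of the matching definition) the firm assignments coincide too, so $\mu_1=\mu_2$. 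I would also note at the outset that $\mu_1\veebar_{\mathcal{W}}\mu_2$ and $\mu_1\barwedge_{\mathcal{W}}\mu_2$ are matchings \emph{by construction}: each worker receives a single firm or $\emptyset$, and each firm receives exactly the workers that point to it, so conditions~1--3 hold automatically.

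The substance is to show these two pointing functions are \emph{stable}, and the engine is the following identity, which I would state and prove as a separate lemma: for every firm $f$,
$$\{\,w : \mu_1\barwedge_{\mathcal{W}}\mu_2(w)=f\,\}=Ch\bigl(\mu_1(f)\cup\mu_2(f),P(f)\bigr),$$
together with the dual identity for $\veebar_{\mathcal{W}}$. I would dispatch the inclusion ``$\supseteq$'' first, as it is the clean direction: if $w\in Ch(\mu_1(f)\cup\mu_2(f),P(f))$, say $w\in\mu_1(f)$, and $f$ were not $w$'s $P(w)$-worse firm, then $f\,P(w)\,\mu_2(w)$ and, by substitutability, $w\in Ch(\mu_2(f)\cup\{w\},P(f))$, so $(w,f)$ would block $\mu_2$, contradicting its stability; hence $w$ points to $f$. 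The reverse inclusion ``$\subseteq$'' is where I expect the main obstacle to lie: a worker pointing to $f$ as its worse firm need not, a priori, survive $f$'s choice from the larger pool $\mu_1(f)\cup\mu_2(f)$, since a worker supplied by the other matching might displace it, and ruling this out cannot be done with substitutability alone. This is precisely the step that consumes $q$-separability, which pins down the cardinalities through the quotas $q_f$ and the good/bad classification so that $f$'s choice from the union is exactly the set of workers selecting it; I would isolate it as a lemma comparing the $Ch$-values of $f$ across the two stable matchings.

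Granting the identity, stability of $\mu_1\barwedge_{\mathcal{W}}\mu_2$ follows. Firm individual rationality is immediate from idempotency of $Ch$, namely $Ch(Ch(A,P(f)),P(f))=Ch(A,P(f))$; worker individual rationality holds because each worker retains one of its two (acceptable) $\mu_i$-partners. For the absence of blocking pairs, a putative blocker $(w,f)$ of the meet is transported, using the identity and the substitutability of $Ch$, into a blocker of whichever $\mu_i$ assigns $w$ its worse partner, contradicting the stability of that $\mu_i$. The argument for $\mu_1\veebar_{\mathcal{W}}\mu_2$ is dual.

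Finally I would check the bound properties. By construction $\mu_1\veebar_{\mathcal{W}}\mu_2\succeq_{\mathcal{W}}\mu_i$ and $\mu_i\succeq_{\mathcal{W}}\mu_1\barwedge_{\mathcal{W}}\mu_2$ for $i=1,2$. If $\eta\in S(\mathcal{F},\mathcal{W},\mathcal{P})$ satisfies $\eta\succeq_{\mathcal{W}}\mu_1$ and $\eta\succeq_{\mathcal{W}}\mu_2$, then for each worker $\eta(w)$ is $R(w)$-at least as good as both $\mu_i(w)$, hence at least as good as their $P(w)$-better, so $\eta\succeq_{\mathcal{W}}\mu_1\veebar_{\mathcal{W}}\mu_2$; thus $\mu_1\veebar_{\mathcal{W}}\mu_2$ is the least upper bound, and dually $\mu_1\barwedge_{\mathcal{W}}\mu_2$ is the greatest lower bound. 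This establishes the lattice. The single delicate point throughout is the ``$\subseteq$'' inclusion in the displayed identity, so that is where I would concentrate the work.
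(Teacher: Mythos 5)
You should first know that the paper contains no proof of this theorem at all: it is imported verbatim from Mart\'inez, Mass\'o, Neme and Oviedo (2001), so the comparison here is between your proposal and what a complete proof would require. Your skeleton is the right one, and it matches the route taken in that literature: the crux is exactly the identity you display, $\{w:\mu_1\barwedge_{\mathcal{W}}\mu_2(w)=f\}=Ch(\mu_1(f)\cup\mu_2(f),P(f))$; your proof of the inclusion $\supseteq$ (substitutability plus stability of the matching supplying the other partner) is correct; and the partial-order and bound-property verifications are indeed routine. But as it stands the proposal is a plan, not a proof, and it has two genuine gaps.

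First, the inclusion $\subseteq$, which you correctly flag as the delicate point, is not merely delicate --- it is the entire mathematical content of the theorem, and you do not supply it. Saying that it ``consumes $q$-separability'' and that you ``would isolate it as a lemma'' is a promissory note. What is actually needed is the cardinality theory recorded in the paper as Theorem \ref{single}: $|\mu_1(f)|=|\mu_2(f)|$ for every $f$, with set equality whenever the quota is unfilled; the unfilled case then makes the identity trivial, while the filled case requires showing both that $|Ch(\mu_1(f)\cup\mu_2(f),P(f))|=q_f$ and that the number of workers pointing to $f$ as their $P(w)$-worse firm is also exactly $q_f$ --- a polarization-of-interests counting argument that must be run across firms, not a local deduction from the definitions. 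Second, the claim that the case of $\veebar_{\mathcal{W}}$ is ``dual'' is wrong in substance. Granting your identity, the join satisfies $\mu_1\veebar_{\mathcal{W}}\mu_2(f)=\bigl(\mu_1(f)\cap\mu_2(f)\bigr)\cup\bigl[(\mu_1(f)\cup\mu_2(f))\setminus Ch(\mu_1(f)\cup\mu_2(f),P(f))\bigr]$, i.e., the intersection plus the \emph{rejected} workers. This is not a choice set, so firm individual rationality does not follow from idempotence of $Ch$; and the blocking-pair transport fails directionally: substitutability passes chosen-ness from supersets to subsets, but $\mu_i(f)\cup\{w\}$ is not a subset of $\mu_1\veebar_{\mathcal{W}}\mu_2(f)\cup\{w\}$, so from $w\in Ch(\mu_1\veebar_{\mathcal{W}}\mu_2(f)\cup\{w\},P(f))$ you cannot descend to $w\in Ch(\mu_i(f)\cup\{w\},P(f))$. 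That the join genuinely costs more is visible in this very paper: Theorem \ref{s}, which establishes precisely that such a $\veebar_{\mathcal{W}}$-join is stable (in the more general stable-versus-firm-quasi-stable setting), requires a cardinality lemma and a two-case blocking analysis running well over a page. So your outline locates the difficulty correctly for the meet, proves the easy half of the key identity, and leaves both the hard half and the entire join case unestablished.
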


The following theorem, which has been proved by   Mart\'{\i}nez et al. (2000), states that  the number of workers assigned to a firm through stable matchings is the same; if the firm does not complete its quota under some stable matching, then it gets the same set of workers at any stable matching.

\begin{theorem}\label{single} 
Let $\mathcal{P}$ be a profile of substitutable and $q-$separable preferences. Then, all pairs $\mu, \mu'\in S(\mathcal{F},\mathcal{W},\mathcal{P})$, and all $f \in \mathcal{F}$:
\begin{enumerate}
\item $|\mu(f)|= |\mu'(f)|$. 
\item If $\mu(f)< q_f$, then $\mu(f)= \mu'(f)$.
\end{enumerate}
\end{theorem}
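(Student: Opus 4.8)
The plan is to prove both parts by first reducing to the case of \emph{comparable} stable matchings and then extracting cardinalities from a single decomposition identity. For the reduction, observe that for arbitrary stable $\mu,\mu'$ their join $\lambda=\mu\vee\mu'$ is stable by Theorem \ref{reti} and satisfies $\lambda\succeq_\mathcal{W}\mu$ and $\lambda\succeq_\mathcal{W}\mu'$. Hence, once each assertion is known for a comparable pair, applying it to these two comparisons yields $|\mu(f)|=|\lambda(f)|=|\mu'(f)|$ (and, for part 2, $\mu(f)=\lambda(f)=\mu'(f)$). So it suffices to treat a comparable pair $\mu_1\succeq_\mathcal{W}\mu_2$, for which Roth's opposition-of-interests result quoted above gives $\mu_2\succeq_\mathcal{F}\mu_1$.

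First I would fix such a pair and prove the \textbf{decomposition lemma}: for every firm $f$, $\mu_2(f)=Ch(\mu_1(f)\cup\mu_2(f),P(f))$. Take $w\in\mu_1(f)\setminus\mu_2(f)$; since $\mu_1(w)=f\neq\mu_2(w)$ and workers weakly prefer $\mu_1$, we get $fP(w)\mu_2(w)$, so stability of $\mu_2$ forces $w\notin Ch(\mu_2(f)\cup\{w\},P(f))$, for otherwise $(w,f)$ would block $\mu_2$. Substitutability then shows that no worker of $\mu_1(f)\setminus\mu_2(f)$ lies in $Ch(\mu_1(f)\cup\mu_2(f),P(f))$, and since the choice function maximizes a strict preference over subsets (so deleting rejected workers leaves the choice unchanged), the choice collapses to $Ch(\mu_2(f),P(f))=\mu_2(f)$.

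The next step converts this identity into cardinalities via $q$-separability. Individual rationality implies that every worker in $\mu_1(f)$ and in $\mu_2(f)$ is acceptable to $f$, so $T=\mu_1(f)\cup\mu_2(f)$ consists entirely of \emph{good} workers and $|\mu_i(f)|\le q_f$. For such a set, $q$-separability gives $Ch(T,P(f))=T$ when $|T|\le q_f$ and $|Ch(T,P(f))|=q_f$ when $|T|>q_f$. With the decomposition lemma this yields the dichotomy: either $\mu_1(f)\subseteq\mu_2(f)$, or $|\mu_2(f)|=q_f$; in both cases $|\mu_2(f)|\ge|\mu_1(f)|$. On the other hand, every worker matched under $\mu_2$ is also matched under $\mu_1$ (her $\mu_2$-partner is acceptable and she weakly prefers her $\mu_1$-partner), so summing over firms gives $\sum_f|\mu_2(f)|\le\sum_f|\mu_1(f)|$. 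The per-firm inequalities $|\mu_2(f)|\ge|\mu_1(f)|$ then force equality firm by firm, proving part 1. For part 2, if $|\mu(f)|<q_f$ then by part 1 every stable matching assigns $f$ fewer than $q_f$ workers, so the alternative $|\mu_2(f)|=q_f$ is impossible; hence $\mu_1(f)\subseteq\mu_2(f)$, which with $|\mu_1(f)|=|\mu_2(f)|$ gives $\mu_1(f)=\mu_2(f)$ for comparable pairs, and the reduction extends this to all stable $\mu,\mu'$.

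The main obstacle is the decomposition lemma: the passage from the pointwise non-blocking conditions to the single equality $\mu_2(f)=Ch(\mu_1(f)\cup\mu_2(f),P(f))$ is exactly where substitutability is essential, and one must combine carefully the substitutability step (no rejected worker of $\mu_1(f)\setminus\mu_2(f)$ re-enters the choice from the larger set) with the automatic irrelevance of rejected workers. The subsequent \emph{squeeze}, pitting a firm-side inequality against a worker-side counting inequality, is the second delicate point, and it is there that $q$-separability (equivalently, the law-of-aggregate-demand behaviour of the choice function) does the decisive work that pure substitutability cannot.
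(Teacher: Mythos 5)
The paper never proves this theorem: it is quoted as a known result of Mart\'inez et al.\ (2000), so there is no in-paper proof to compare yours against; I can only judge the proposal on its own merits, and on those merits it is correct. The decomposition lemma is proved exactly right: for a comparable stable pair $\mu_1\succeq_\mathcal{W}\mu_2$, any $w\in\mu_1(f)\setminus\mu_2(f)$ strictly prefers $f$ to $\mu_2(w)$, stability of $\mu_2$ gives $w\notin Ch(\mu_2(f)\cup\{w\},P(f))$, the contrapositive of substitutability lifts this to $w\notin Ch(\mu_1(f)\cup\mu_2(f),P(f))$, and consistency of preference-maximizing choice plus individual rationality collapse the choice to $\mu_2(f)$. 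The $q$-separability dichotomy (either $\mu_1(f)\subseteq\mu_2(f)$ or $|\mu_2(f)|=q_f$, hence $|\mu_2(f)|\ge|\mu_1(f)|$), the worker-counting inequality $\sum_f|\mu_2(f)|\le\sum_f|\mu_1(f)|$, and the resulting firm-by-firm squeeze are all sound, as is the derivation of part 2 from the dichotomy plus part 1.

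One caveat deserves attention: your reduction to comparable pairs invokes Theorem \ref{reti}, the lattice theorem of Mart\'inez et al.\ (2001). In the literature that theorem is established \emph{after}, and by means of, the cardinality result you are proving; indeed, this paper's own proof of Theorem \ref{s} has to re-derive the identity $|\mu_1\veebar_\mathcal{W}\mu_2(f)|=|\mu_1(f)|$ precisely because such counting facts underlie lattice-type statements. So, traced back to its sources, your argument is circular, even though it is formally valid inside this paper, where Theorem \ref{reti} is stated as background before Theorem \ref{single}. The repair is immediate and costs nothing: comparability does not require the lattice. Roth's polarization result, quoted in the Preliminaries, gives $\mu_\mathcal{W}\succeq_\mathcal{W}\mu$ for \emph{every} stable $\mu$, so you can run your comparable-pair argument on the pairs $(\mu_\mathcal{W},\mu)$ and $(\mu_\mathcal{W},\mu')$ to get $|\mu(f)|=|\mu_\mathcal{W}(f)|=|\mu'(f)|$, and, when $|\mu(f)|<q_f$, $\mu(f)=\mu_\mathcal{W}(f)=\mu'(f)$; this anchoring at an optimal matching is also how the original proof proceeds. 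A last small point: your claim that the paper's quoted Roth result yields $\mu_2\succeq_\mathcal{F}\mu_1$ for an arbitrary comparable stable pair overreaches (the paper quotes only the statement for the two optimal matchings; the pairwise version is exactly the Blair-order fact your decomposition lemma establishes), but since you never use that remark, nothing in the proof depends on it.
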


\section[Firm quasi-stable matching ]{Firm quasi-stable matching}

In this section we describe a class of matchings we call firm quasi-stable and we prove some technical results, which are   key for this paper. Firm quasi-stability was first introduced by Sotomayor (1996) as part of a new  proof of the existence of stable matchings. This  concept was also   analyzed by Blum et al. (1997) in one-to-one matching models and   Cantala (2003) extends to many-to-one models the definition of a firm quasi-stable matching. These matchings  can arise from the disruption of a stable matching due to changes in the population (retirement of workers or the creation of positions).

\medskip

Let $\mu$ be a matching, then we denote $W_{f,\mu}$ the set of workers who prefer firm $f$ to their match under $\mu$, that is, $W_{f,\mu}=\{ w\in \mathcal{F}: fP(w)\mu(w) \}$

\begin{definition}
A matching $\mu$ is firm quasi-stable if it is  individually rational and for every  $f \in \mathcal{F}$, $S\subseteq W_{f,\mu}$ and $S\not=\emptyset$, $\mu(f)\subseteq Ch(\mu(f)\cup S,P(f))$.
\end{definition}

This means that even if a firm is involved in a blocking coalition $(S,f)$, $f$ does not fire any worker when selecting their most preferred set of workers from those in $\mu(f)\cup S$. We denote by $FQS( \mathcal{F}, \mathcal{W},\mathcal{P})$ the set of firm quasi-stable matchings of market $( \mathcal{F}, \mathcal{W},\mathcal{P})$

\medskip

The next theorem  shows that the  binary operator $\underline{\vee }_{\mathcal{W}}$ preserves stability, when just one of the two underlying matchings is stable  and the other one is  firm-quasi-stable.

\begin{theorem}\label{s}
Let  $\mu_1\in S(\mathcal{F},\mathcal{W},\mathcal{P})$ y $\mu_2\in FQS( \mathcal{F},\mathcal{W},\mathcal{P})$. Then $\mu_1 \veebar_{W} \mu_2 \in S(\mathcal{F},\mathcal{W},\mathcal{P})$.
\end{theorem}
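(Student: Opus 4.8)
The plan is to set $\lambda:=\mu_1\veebar_{\mathcal W}\mu_2$ and verify the three defining requirements of stability: that $\lambda$ is a well-defined matching, that it is individually rational, and that it admits no blocking pair. The first two are immediate on the workers' side, since each worker points to a single firm (or to $\emptyset$), firms receive exactly the workers pointing at them, and because $\lambda(w)$ is the $P(w)$-better of $\mu_1(w)$ and $\mu_2(w)$ — both acceptable or empty, as $\mu_1$ is stable and $\mu_2$ is firm quasi-stable — $\lambda(w)$ is acceptable or empty. The whole difficulty is concentrated on the firms' side, and the organizing observation is the identity $W_{f,\lambda}=W_{f,\mu_1}\cap W_{f,\mu_2}$ (a worker prefers $f$ to its $\lambda$-partner iff it prefers $f$ to the better of its two partners, i.e. to both), together with the containments $\lambda(f)\cup W_{f,\lambda}\subseteq\mu_1(f)\cup W_{f,\mu_1}$ and $\lambda(f)\cup W_{f,\lambda}\subseteq\mu_2(f)\cup W_{f,\mu_2}$, each obtained by unwinding the definition of $\lambda$.

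Writing $X:=\lambda(f)\cup W_{f,\lambda}$, the first key step shows $\lambda(f)\subseteq Ch(X,P(f))$, i.e. that $\lambda$ is itself firm quasi-stable. Here I would use the iterated form of substitutability, $Y'\subseteq Y\Rightarrow Ch(Y,P(f))\cap Y'\subseteq Ch(Y',P(f))$, applied to the two containments. Stability of $\mu_1$ yields the standard characterization $Ch(\mu_1(f)\cup W_{f,\mu_1},P(f))=\mu_1(f)$ (a consequence of no blocking together with substitutability and $q$-separability), so intersecting with $X$ gives $\lambda(f)\cap\mu_1(f)\subseteq Ch(X,P(f))$; firm quasi-stability of $\mu_2$ gives $\mu_2(f)\subseteq Ch(\mu_2(f)\cup W_{f,\mu_2},P(f))$, so intersecting with $X$ gives $\lambda(f)\cap\mu_2(f)\subseteq Ch(X,P(f))$. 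Since $\lambda(f)\subseteq\mu_1(f)\cup\mu_2(f)$, the union of these inclusions is exactly $\lambda(f)\subseteq Ch(X,P(f))$. As $|Ch(X,P(f))|\le q_f$, this also gives $|\lambda(f)|\le q_f$, so $\lambda$ is individually rational for firms and $\lambda\in FQS(\mathcal F,\mathcal W,\mathcal P)$. A short consequence of firm quasi-stability is that $\lambda$ cannot be blocked by a pair at a firm that fills its quota.

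It therefore remains to rule out a blocking pair $(v,f_0)$ at a firm with $|\lambda(f_0)|<q_{f_0}$, and this is the main obstacle. Such a $v$ is acceptable to $f_0$ and satisfies $f_0P(v)\mu_1(v)$, so $v\in W_{f_0,\mu_1}$; stability of $\mu_1$ gives $v\notin Ch(\mu_1(f_0)\cup\{v\},P(f_0))$, and $q$-separability then forces $f_0$ to be full under $\mu_1$, i.e. $|\mu_1(f_0)|=q_{f_0}>|\lambda(f_0)|$. Hence $f_0$ strictly loses workers in passing from $\mu_1$ to $\lambda$: some $w_1\in\mu_1(f_0)\setminus\lambda(f_0)$ has moved to a strictly preferred firm $f_1:=\lambda(w_1)=\mu_2(w_1)$. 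I would iterate this along a vacancy chain: at each firm $f_k$ reached by an incoming mover $w_k$ (with $w_k\in\mu_2(f_k)$ acceptable and $f_kP(w_k)\mu_1(w_k)$), stability of $\mu_1$ again forces $|\mu_1(f_k)|=q_{f_k}$, while firm individual rationality of $\lambda$ gives $|\lambda(f_k)|\le q_{f_k}$; since $w_k$ lies below the full $\mu_1$-roster of $f_k$, the firm must release a further incumbent $w_{k+1}\in\mu_1(f_k)\setminus\lambda(f_k)$ to a strictly preferred firm $f_{k+1}$.

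The proof is then closed by a finiteness contradiction, which I expect to be the delicate point. Every firm visited along the chain is full under $\mu_1$ and, by firm individual rationality, holds at most its quota under $\lambda$, so at each such firm the number of departing movers is at least the number of arriving movers. Consequently the chain initiated by the strict deficit at $f_0$ can never be forced to stop — an unused departing incumbent is always available at each firm it reaches — yet it consumes a distinct worker at every step from the finite set of $w$ with $\mu_2(w)P(w)\mu_1(w)$, which is impossible. This rules out the blocking pair and completes the argument. The two technical inputs that make the chain argument go through are $q$-separability and the constancy of $|\mu(f)|$ across stable matchings recorded in Theorem~\ref{single}.
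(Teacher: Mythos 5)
Your proof is correct, but it is organized quite differently from the paper's, so it is worth comparing the two. Writing $\lambda=\mu_1\veebar_{\mathcal W}\mu_2$ as you do: the paper first proves individual rationality for firms by decomposing $\lambda(f)$ into three pieces (workers kept from $\mu_1(f)$, workers kept from $\mu_2(f)$, and common workers) and applying substitutability twice; it then proves the cardinality identity $|\lambda(f)|=|\mu_1(f)|$ for every $f$; finally it splits the blocking-pair analysis by quota, where the under-quota case is immediate from the cardinality identity (the block transfers to $\mu_1$) and the full-quota case is the laborious one, needing two further subcases. Your proof inverts this difficulty profile. Your strengthened claim $\lambda(f)\subseteq Ch\bigl(\lambda(f)\cup W_{f,\lambda},P(f)\bigr)$ --- obtained from $W_{f,\lambda}=W_{f,\mu_1}\cap W_{f,\mu_2}$, the stability characterization $Ch(\mu_1(f)\cup W_{f,\mu_1},P(f))=\mu_1(f)$, and iterated substitutability --- is genuinely stronger than what the paper establishes (it shows $\lambda\in FQS(\mathcal F,\mathcal W,\mathcal P)$, not merely individual rationality), and it dispatches the full-quota blocking case in two lines, exactly where the paper works hardest. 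In exchange, your hard case is the under-quota one, which the paper gets for free; your vacancy-chain argument there is sound, but, as you anticipated, the finiteness step is where the rigor must go, and the cleanest way to close it is to replace the chain by a global count: every firm receiving an arrival $w\in\lambda(f)\setminus\mu_1(f)$ must be full under $\mu_1$ (else $(w,f)$ blocks $\mu_1$, by $q$-separability and acceptability of $w$), hence satisfies $|\mu_1(f)\setminus\lambda(f)|\ge|\lambda(f)\setminus\mu_1(f)|$ by individual rationality of $\lambda$, with strict surplus at $f_0$; summing over firms yields strictly more departures than arrivals, while by definition of $\veebar_{\mathcal W}$ every departing worker arrives at some firm, a contradiction. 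That count is precisely the paper's identity $|\lambda(f)|=|\mu_1(f)|$ in disguise, so the two proofs ultimately rest on the same two pillars (propagation of choices via substitutability, and conservation of matched workers), just distributed differently across the cases. One small correction: your closing sentence invokes Theorem~\ref{single}, but nothing in your argument uses it --- only stability of $\mu_1$, firm quasi-stability of $\mu_2$, individual rationality of $\lambda$, and $q$-separability are needed.
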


\begin{proof}
The individual rationality of matching $\mu_1 \veebar_{\mathcal{W}} \mu_2$ for each worker follows from its definition. We next show that the same conclusion holds for the firms. Let $f \in \mathcal{F} $ and $\mu_1 \veebar_{\mathcal{W}} \mu_2(f)=S_1\cup S_2\cup S_3$ such that
\smallskip

$\begin{tabular}{l} $S_1=\{w\in \mu_1 \veebar_{\mathcal{W}} \mu_2(f): w\in  \mu_1(f)$ {and} $\mu_1(w)\not=\mu_2(w)\}$ \medskip \\ 
 $S_2=\{w\in \mu_1 \veebar_{\mathcal{W}} \mu_2(f): w\in  \mu_2(f)$ {and} $\mu_1(w)\not=\mu_2(w)\}$\medskip \\  $S_3=\{w\in \mu_1 \veebar_{\mathcal{W}} \mu_2(f):  \mu_1(w)=\mu_2(w)=f\}$ \end{tabular}$.

\smallskip

 Since all workers  $w\in S_1$ prefer $f$ to $\mu_2(w)$,  $S_1\subseteq W_{f,\mu_2}$. As $\mu_2\in  FQS(F,\mathcal{W},\mathbf{P})$, we have that $\mu_2(f)\subseteq Ch(\mu_2(f)\cup S_1,P(f))$. Also  $S_2\cup S_3\subseteq \mu_2(f)$, it follows that $S_2\cup S_3\subseteq Ch(\mu_2(f)\cup S_1,P(f))$, implying   that, $S_2\cup S_3\subseteq Ch((S_2\cup S_3)\cup S_1,P(f))=Ch(\mu_1 \veebar_{\mathcal{W}} \mu_2(f),P(f))$, by the substitutability of $P(f)$. Now, as $S_2\subseteq W_{f,\mu_1}$,  the stability of  $\mu_1$ implies that $\mu_1(f)= Ch(\mu_1(f)\cup S_2,P(f))$. Also, as  $S_1\cup S_3\subseteq \mu_1(f)$, we have that $S_1\cup S_3\subseteq Ch(\mu_1(f)\cup S_2,P(f))$, then the   substitutability of $P(f)$ implies that $S_1\cup S_3\subseteq Ch((S_1\cup S_3)\cup S_2,P(f))=Ch(\mu_1 \veebar_{\mathcal{W}} \mu_2(f),P(f))$. We conclude that $\mu_1 \veebar_{\mathcal{W}} \mu_2(f)=S_1\cup S_2\cup S_3\subseteq Ch(\mu_1 \veebar_{\mathcal{W}} \mu_2(f),P(f))$, completing the proof that $\mu_1 \veebar_{\mathcal{W}} \mu_2$ is  individually rational for each  firm. 

\smallskip

We claim that the following equality 
\begin{equation}
|\mu_1 \veebar_{\mathcal{W}} \mu_2(f)|=|\mu_1(f)|  \label{7.3}
\end{equation}
holds for any firm $f \in  \mathcal{F}$. Assume that there exists al least one firm  $f\in  \mathcal{F}$ such that $|\mu_1(f)|<|\mu_1 \veebar_{\mathcal{W}} \mu_2(f)|$. Then, we can find $w\in \mu_1 \veebar_{\mathcal{W}} \mu_2(f)\setminus \mu_1(f)$. So,  $w\in \mu_2(f)$ and $fP(w) \mu_1(w)$. By  individual rationality of $\mu_2$ for the firms and   $q_f$-separability of $P(f)$, $wP(f)\emptyset$. As $|\mu_1(f)|<|\mu_1 \veebar_{\mathcal{W}} \mu_2(f)|\leq q_f$ (the last inequality is implied by $\mu_1 \veebar_{\mathcal{W}} \mu_2$ is  individually rational) $q_f$-separability of $P(f)$ implies that  $w \in Ch(\mu_1(w)\cup \{w\},P(f))$. So,  $(w,f)$ is a blocking pair of $\mu_1$, a contradiction. Therefore,
$|\mu_1 \veebar_{\mathcal{W}} \mu_2(f)|\leq|\mu_1(f)|$ for all $f \in  F$. Assume that there exists $f\in  \mathcal{F}$ with the property that $|\mu_1 \veebar_{\mathcal{W}} \mu_2(f)|<|\mu_1(f)|$. Then
\begin{center}
$\displaystyle\sum_{f \in  F}|\mu_1 \veebar_{\mathcal{W}} \mu_2(f)|<\displaystyle\sum_{f \in  F}|\mu_1(f)|$, 
\end{center}
which implies that there exists   $w \in \displaystyle\bigcup_{f \in {\cal F}} \ \mu_1(f)\setminus \displaystyle\bigcup_{f \in {\cal F}} \ \mu_1 \veebar_{\mathcal{W}} \mu_2(f)$. Hence, we have that there exists two firms $f'$ and $f^*$, such that $w\in \mu_1(f')$ and  $\mu_2(w)=f^*$ o $\mu_2(w)=w$. Then, by the definition of $\veebar_{\mathcal{W}}$, we have either $w\in \mu_1 \veebar_{\mathcal{W}} \mu_2(f')$ or $w\in \mu_1 \veebar_{\mathcal{W}} \mu_2(f^*)$, which contradicts the fact that $w\not\in \displaystyle\bigcup_{f \in {\cal F}}\mu_1 \veebar_{\mathcal{W}} \mu_2(f)$. Thus, $|\mu_1 \veebar_{\mathcal{W}} \mu_2(f)|=|\mu_1(f)|$ for all $f \in  F$. 

\smallskip

To finish with the proof that $\mu_1 \veebar_{\mathcal{W}} \mu_2 $ is a stable matching, assume that the pair $(w,f)$ blocks $\mu_1 \veebar_{\mathcal{W}} \mu_2 $,  that is,   
\begin{equation}\label{block-1}
 fP(w) \mu_1 \veebar_{\mathcal{W}} \mu_2(w) \ \ and \ \ w\in Ch(\mu_1 \veebar_{\mathcal{W}} \mu_2(f),P(f)) 
\end{equation}

We distinguish between the following two cases: 
\smallskip

 \noindent Case 1: $|\mu_1 \veebar_{\mathcal{W}} \mu_2(f)|<q_f$.
By condition  (2)    we have that  $fP(w) \mu_1 \veebar_{\mathcal{W}} \mu_2(w) R(w) \mu_1(w)$.  Furthermore, since    $|\mu_1(f)|=|\mu_1 \veebar_{\mathcal{W}} \mu_2(f)|<q_f$ (by condition  (1)),  $wR(f) \emptyset$ and  $q_f$-separability of $P(f)$  we have that $w\in Ch(\mu_1(f)\cup \{w\},P(f))$. Then, the pair $(w,f)$
 blocks   $\mu_1$, a contradiction.

\smallskip

\noindent Case 2: $|\mu_1 \veebar_{\mathcal{W}} \mu_2(f)|=q_f$.  
Then, there exists  $w_1\in \mu_1 \vee_{\mathcal{W}} \mu_2(f)$ such that
\begin{equation}\label{trab-desp}
w_1\not\in C(\mu_1 \veebar_{\mathcal{W}} \mu_2(f)\cup \{w \},P(f)).  
\end{equation}
First, we assume that $w_1\in \mu_1(f)$. We claim that the following equality
\begin{equation}\label{igualdad}
Ch(\mu_1\veebar_\mathcal{W} \mu_2(f) \cup \{w\} \cup \mu_1(f),P(f))=\mu_1(f)
\end{equation}
holds. Assume that there exists  $w' \in [Ch(\mu_1\veebar_\mathcal{W} \mu_2(f) \cup \{w\} \cup \mu_1(f),P(f))]\setminus [\mu_1(f)]$. Then either $w'=w$ in which case, by condition (%
\ref{block-1}) and the substitutability of $P(f)$  the
pair $(w,f)$ also blocks  $\mu_1$  or else $w'\not=w$ , implying that, $w'\in Ch(\mu_1\veebar_\mathcal{W} \mu_2(f)  \cup \mu_1(f),P(f))$ by the substitutability of $P(f)$. Therefore, and again by the
substitutability of $P(f)$ we have that $w'\in Ch(\{w'\}  \cup \mu_1(f),P(f))$. But since $w'\in \mu_1\veebar_\mathcal{W} \mu_2(f)\setminus \mu_1(f)$  $fP(w') \mu_1(w')$,  which
implies that the pair $ (w',f)$ blocks $\mu_1$. Therefore, condition (\ref{igualdad}) holds. Applying again the
assumption that $P(f)$ is substitutable, we have that 
$w_1\in Ch(\mu_1\veebar_\mathcal{W} \mu_2(f) \cup \{w\} \cup \{w_1\},P(f))$
which contradicts \ref{trab-desp} since $w_1\in \mu_1\veebar_\mathcal{W} \mu_2(f)$. Secondly, assume that $w_1\in \mu_2(f)$. Let
 $S=\{w'\in \mu_1 \veebar_{\mathcal{W}} \mu_2(f): w'\in \mu_1(f)$ y $\mu_1(w')\not=\mu_2(w') \}$, then $S\subseteq W_{f,\mu_2}$ . By the firm quasi-stable of $\mu_2$,    $\mu_2(f) \subseteq Ch(S\cup \{w\} \cup \mu_2(f),P(f))$. Further the equality 
 $\mu_1 \veebar_{\mathcal{W}} \mu_2(f) \cup \{w \} \cup \mu_2(f) = S\cup \{w \} \cup \mu_2(f)$ implies that 
$\mu_2(f) \subseteq Ch(\mu_1\veebar_\mathcal{W} \mu_2(f) \cup \{w\} \cup \mu_2(f),P(f))$. Therefore,  by the substitutability of $P(f)$ we have that $w_1 \in  Ch(\mu_1\veebar_\mathcal{W} \mu_2(f) \cup \{w\} \cup \{w_1\} ,P(f))$ which contradicts \ref{trab-desp} since $w_1 \in \mu_1\veebar_\mathcal{W} \mu_2(f)$.
\end{proof}

\medskip

The next corollary   will be very useful in our development.

\begin{corollary}\label{lema5}
Let $\mu'\in FQS (\mathcal{F}, \mathcal{W},\mathcal{P})$ and $w\in \mathcal{W}$. Then either  $\mu'(w)$ is achieva\-ble for $w$ or $\mu_\mathcal{F}(w)P(w)\mu'(w)  $
\end{corollary}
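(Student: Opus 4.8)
The plan is to build a single auxiliary stable matching that simultaneously witnesses both alternatives, by feeding the firm-optimal matching and $\mu'$ into Theorem \ref{s}. Concretely, I would set $\nu = \mu_\mathcal{F} \veebar_{\mathcal{W}} \mu'$. Since $\mu_\mathcal{F} \in S(\mathcal{F},\mathcal{W},\mathcal{P})$ (the firm-optimal stable matching exists under substitutability) and $\mu' \in FQS(\mathcal{F},\mathcal{W},\mathcal{P})$ by hypothesis, Theorem \ref{s} applies directly and yields $\nu \in S(\mathcal{F},\mathcal{W},\mathcal{P})$.

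The remainder is just reading off the value $\nu(w)$ from the definition of the pointing function $\veebar_{\mathcal{W}}$, whose two branches coincide exactly with the two cases of the statement. Fix $w \in \mathcal{W}$. By definition, $\nu(w) = \mu_\mathcal{F}(w)$ when $\mu_\mathcal{F}(w) P(w) \mu'(w)$, and $\nu(w) = \mu'(w)$ otherwise. In the first case, $\mu_\mathcal{F}(w) P(w) \mu'(w)$ is precisely the second alternative of the corollary. In the second case, $\nu(w) = \mu'(w)$ while $\nu$ is stable, so $\mu'(w)$ is the partner of $w$ in a stable matching, i.e. $\mu'(w)$ is achievable for $w$, which is the first alternative.

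Since the dichotomy ``$\mu_\mathcal{F}(w) P(w) \mu'(w)$'' versus its negation is exhaustive and matches the branches of $\veebar_{\mathcal{W}}$, every $w$ falls into one of the two alternatives, finishing the argument. There is essentially no obstacle once the correct inputs to Theorem \ref{s} are identified; the only points requiring care are (i) confirming that $\mu_\mathcal{F}$ is available as a stable matching to plug in, and (ii) observing that achievability of $\mu'(w)$ is witnessed by the very matching $\nu$ we constructed, with no further search among stable matchings needed. I would also remark that using $\mu_\mathcal{F}$ specifically, rather than an arbitrary stable matching, is exactly what produces the sharp form $\mu_\mathcal{F}(w) P(w) \mu'(w)$ in the second alternative, since $\mu_\mathcal{F}$ is the worker-worst stable matching.
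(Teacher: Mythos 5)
Your proof is correct and follows essentially the same route as the paper: both apply Theorem \ref{s} to the pair $(\mu_\mathcal{F},\mu')$ to obtain a stable matching $\mu'\veebar_\mathcal{W}\mu_\mathcal{F}$, then read the dichotomy off the two branches of the pointing function (the paper phrases it contrapositively, assuming $\mu'(w)$ is not achievable, but the content is identical). No gaps; your closing remark about $\mu_\mathcal{F}$ being worker-worst correctly explains why this choice gives the strongest form of the statement.
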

\begin{proof}
Assume  $\mu'(w)$  is not achievable for $w$ and let $\mu^* = \mu' \veebar_\mathcal{W} \mu_{\cal F}$. By Theorem \ref{s} $\mu^*\in S( \mathcal{F},\mathcal{W},\mathcal{P})$, then $\mu^*(w)$ is achievable for $w$. So, as $\mu^*(w)\in \{ \mu'(w),\mu_ F(w) \}$  and $\mu'(w)$ is not achievable for $w$,  $\mu^*(w)= \mu_\mathcal{F}(w)P(w) \mu'(w)$.
\end{proof}

Given an acceptable matching $\mu'$, define
\begin{center}
$S_\mathcal{W}(\mu')=\{\mu \in S(\mathcal{F},\mathcal{W},\mathcal{P}):\mu\succeq_\mathcal{W} \mu'\}$
\end{center}
that is, $S_\mathcal{W}(\mu')$ the set of stable matchings that the workers weakly prefer to $\mu'$.

In the following lemma  we prove that when $\mu'$ is a firm-
quasi-stable matching,  $S_\mathcal{W}(\mu')$ is a nonempty sub-lattice of  $S(\mathcal{F},\mathcal{W},\mathcal{P})$ under the partial order $\succeq_\mathcal{W}$, with lattice operators $\barwedge_\mathcal{W}$ and $\veebar_\mathcal{W}$.

\begin{lemma} \label{3}
Let $\mu'\in FQS(\mathcal{F},\mathcal{W},\mathcal{P})$. Then $S_\mathcal{W}(\mu')$ is a non-empty sub-lattice of $S(\mathcal{F},\mathcal{W},\mathcal{P})$.
\end{lemma}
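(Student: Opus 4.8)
The plan is to establish two things: first that $S_\mathcal{W}(\mu')$ is nonempty, and then that it is closed under both lattice operators $\barwedge_\mathcal{W}$ and $\veebar_\mathcal{W}$, which makes it a sub-lattice of the full stable lattice from Theorem \ref{reti}.

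For nonemptiness, the natural candidate is the worker-optimal stable matching $\mu_\mathcal{W}$, since it is the top element of the lattice under $\succeq_\mathcal{W}$. I would argue $\mu_\mathcal{W} \succeq_\mathcal{W} \mu'$ worker by worker using Corollary \ref{lema5}. Fix $w$. If $\mu'(w)$ is achievable for $w$, then $\mu_\mathcal{W}(w) R(w) \mu'(w)$ because $\mu_\mathcal{W}$ gives each worker their best achievable firm. If instead $\mu'(w)$ is not achievable, Corollary \ref{lema5} gives $\mu_\mathcal{F}(w) P(w) \mu'(w)$, and since $\mu_\mathcal{W}(w) R(w) \mu_\mathcal{F}(w)$ (the worker-optimal matching is weakly preferred by every worker to the firm-optimal one, by Roth's result quoted in the preliminaries), we again get $\mu_\mathcal{W}(w) R(w) \mu'(w)$. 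Hence $\mu_\mathcal{W} \in S_\mathcal{W}(\mu')$.

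For closure under the operators, let $\mu_1,\mu_2 \in S_\mathcal{W}(\mu')$. Both are stable, so by Theorem \ref{reti} their join $\mu_1 \veebar_\mathcal{W} \mu_2$ and meet $\mu_1 \barwedge_\mathcal{W} \mu_2$ are again stable; what remains is to check each dominates $\mu'$ in $\succeq_\mathcal{W}$. The join is immediate: for each $w$, the value $\mu_1 \veebar_\mathcal{W} \mu_2(w)$ is the $P(w)$-better of $\mu_1(w),\mu_2(w)$, both of which are $R(w)$-weakly above $\mu'(w)$ by hypothesis, so the better of them is too. The meet is the delicate case, since it selects each worker's \emph{worse} firm, and a worker could in principle be assigned $\mu_i(w)$ with $\mu_i(w) R(w) \mu'(w)$ yet $\mu_1 \barwedge_\mathcal{W} \mu_2(w)$ being the other, potentially worse, choice—so a naive pointwise comparison does not directly close.

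I expect the meet to be the main obstacle. The way around it is to note that $\mu_1 \barwedge_\mathcal{W} \mu_2$ is already known to be stable, hence firm quasi-stable, and then compare it with $\mu'$ through Corollary \ref{lema5} again: for each $w$, either $\mu_1 \barwedge_\mathcal{W} \mu_2(w)$ is achievable—in which case it is some $\mu_i(w)$ and one argues it must in fact be $R(w)$-above $\mu'(w)$ because both $\mu_1(w)$ and $\mu_2(w)$ are—or else Corollary \ref{lema5} applied to the firm quasi-stable $\mu'$ forces $\mu_\mathcal{F}(w) P(w)\mu'(w)$, which contradicts the achievability of $\mu'(w)$ under stable matchings. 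The cleanest route is to observe that since $\mu_1 \barwedge_\mathcal{W}\mu_2$ is stable, every $w$ receives an achievable firm, so $\mu_1 \barwedge_\mathcal{W} \mu_2(w) \in \{\mu_1(w),\mu_2(w)\}$ is achievable and both elements of that set are $R(w)$-above the achievable-or-dominated value $\mu'(w)$; invoking the achievability of $\mu'$ or Corollary \ref{lema5} to pin down $\mu'(w)$ then yields $\mu_1 \barwedge_\mathcal{W}\mu_2 \succeq_\mathcal{W}\mu'$. With both operators shown to preserve membership, $S_\mathcal{W}(\mu')$ is a nonempty sub-lattice, as claimed.
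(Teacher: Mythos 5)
Your proof is correct, but one of its two ``ideas'' solves a problem that does not exist. On non-emptiness you take a genuinely different (though closely related) route from the paper: you get $\mu_\mathcal{W}\succeq_\mathcal{W}\mu'$ worker by worker from Corollary \ref{lema5} plus Roth's polarization result $\mu_\mathcal{W}\succeq_\mathcal{W}\mu_\mathcal{F}$, whereas the paper applies Theorem \ref{s} directly: $\mu'\veebar_\mathcal{W}\mu_\mathcal{W}$ is stable, it dominates $\mu'$ by the definition of the join, and worker-optimality gives $\mu_\mathcal{W}\succeq_\mathcal{W}\mu'\veebar_\mathcal{W}\mu_\mathcal{W}\succeq_\mathcal{W}\mu'$. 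The two arguments cost the same, since Corollary \ref{lema5} is itself a one-line consequence of Theorem \ref{s}; the paper's version just avoids the detour through achievability and the polarization theorem.

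Where your reasoning goes astray is the meet. You assert that ``a naive pointwise comparison does not directly close'' and then bring in stability of the meet, achievability, and Corollary \ref{lema5} to repair it. But the naive comparison closes immediately, and it is exactly what the paper does: by definition, $\mu_1\barwedge_\mathcal{W}\mu_2(w)$ \emph{is} one of the two firms $\mu_1(w)$, $\mu_2(w)$ (namely the $P(w)$-worse of them), and \emph{both} of these satisfy $\mu_i(w)\,R(w)\,\mu'(w)$ by the hypothesis $\mu_1,\mu_2\in S_\mathcal{W}(\mu')$; so whichever one the meet selects is weakly above $\mu'(w)$, by transitivity of $R(w)$ alone. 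The asymmetry between join and meet would only matter if membership in $S_\mathcal{W}(\mu')$ imposed an \emph{upper} bound; here the constraint is a lower bound, and any selection from two firms above the bound stays above it. Your own clause ``because both $\mu_1(w)$ and $\mu_2(w)$ are'' is the entire argument; the surrounding machinery (stability of the meet, achievability of its values, invoking Corollary \ref{lema5} to ``pin down'' $\mu'(w)$) is true but does no work, and the dichotomy you build on Corollary \ref{lema5} has an empty second branch, since the meet of two stable matchings is stable and hence assigns every worker an achievable firm. So the proof stands, but the ``delicate case'' you flag is in fact the trivial one.
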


\begin{proof}
By Theorem \ref{s}, $\mu' \veebar_{\mathcal{W}} \mu_\mathcal{W} \in S(\mathcal{F},\mathcal{W},\mathcal{P})$. and by the
optimality of $\mu_\mathcal{W}$  within $S(\mathcal{F},\mathcal{W},\mathcal{P})$  it follows that $\mu_\mathcal{W} \succeq_\mathcal{W} \mu' \veebar_{\mathcal{W}} \mu_\mathcal{W}$. Hence
$\mu_\mathcal{W} \succeq_\mathcal{W} \mu' \veebar_{\mathcal{W}} \mu_\mathcal{W}\succeq_\mathcal{W} \mu'$. Then  $\mu_\mathcal{W} \in S_\mathcal{W}(\mu')$, hence $S_\mathcal{W}(\mu')\not=\emptyset$.
Further, if $\mu_1,\mu_2 \in S_\mathcal{W}(\mu')$, then $\mu_1,\mu_2 \in S(\mathcal{F},\mathcal{W},\mathcal{P})$ and $\mu_1,\mu_2\succeq_\mathcal{W} \mu'$. By Theorem \ref{reti} $\mu_1 \veebar_\mathcal{W} \mu_2, \mu_1 \barwedge_\mathcal{W} \mu_2\in S(\mathcal{F}, \mathcal{W},\mathcal{P})$. Then by
%Ya que $S({\cal F},W,\mathbf{P})$ es un reticulado tenemos que $\mu_1 \veebar_W \mu_2,\mu_1 \barwedge_W \mu_2 \in S({\cal F},W,\mathbf{P})$ y
the  definition of $\barwedge_\mathcal{W}$ and $\veebar_\mathcal{W}$ it follows that  $\mu_1 \veebar_\mathcal{W} \mu_2 \succeq_\mathcal{W} \mu' $ and $\mu_1 \barwedge_\mathcal{W} \mu_2 \succeq_\mathcal{W} \mu'$. So,  $\mu_1 \veebar_\mathcal{W} \mu_2 \in S(\mu')$ and $\mu_1 \barwedge_\mathcal{W} \mu_2 \in S(\mu')$, completing the proof that $S_\mathcal{W}(\mu')$ is a non-empty sub-lattice of $S(\mathcal{F},\mathcal{W},\mathcal{P})$.
\end{proof}

\medskip

\section[Restabilization after change in the population ]{Restabilization after change in the population}

Consider  a market which has achieved a stable matching, and
which is then disrupted by  change in the population (the retirement of some workers and/or the entry of new firms). The Set Offering (SO) Algorithm introduced by Cantala (2004) shows that   the market always reaches stability again  after experiencing disruptions which result in firm quasi-stable matchings.\footnote{The SO Algorithm is an   adapted version of the DA Algorithm where firms make offers.}.
%We then show that the output of the So algorithm on inputs that are in this class are always stable and we provide several characterizations of such outputs.
We then  provide several characterizations of the stable matching obtained though  the SO Algorithm. 
%en función the preferences of the agents, together with the particular firm-quasi- stable matching at which the process starts.
%The algorithm mimics markets where firms make offers and workers accept the offer of their favorite firm.
%En esta secci\'on charaterizes the outcome of the SO Algorithm. 
%mostramos dos formas de calcular elresultado de la ejecuci\'on del    algoritmo $(SO)$, cuando la entrada pertenece a la clase de asignaciones  firma-cuasi-estables.

%\subsection[Characterization of the outcome of the Set Offering  Algorithm]{Characterization of the outcome of the Set Offering  Algorithm}

%El algoritmo $(SO)$ fu\'e introducido por Cantala (2003) T como un proceso de restabilizaci\'on de  asignaciones  desestabilizadas por cambios en la poblaci\'on (retiro de trabajadoreres o creaci\'on de nuevas firmas).

Cantala (2004) establishes that by   using  a firm quasi-stable matching as input,  the sequence of tentative matching produced by the  SO  Algorithm is composed by firm quasi-stable matchings (note that this property will be token into account from now on wherever we describe the algorithm) and the output matching is stable.  
\medskip

 We consider a market $(\mathcal{F},\mathcal{W},\mathcal{P})$ and a firm  quasi-stable  matching. For each iteration, consider the set of workers who may want to join firm  $f$. This set, $A_f^{i-1}$, is the set of workers who have never rejected the offer of $f$ or have never been matched to it until
iteration  $i$, and who belong to an acceptable subset of workers. Each firm makes an offer to the workers in   $S_f^{i-1}$,   which belong to the  choice de $f$ among the set of available workers $A_f^{i-1}$ and their current workforce $\mu^{i-1}(f)$. Workers
accept the offer of their favorite firm between their present match and the firms which made them an offer. Given the new matching, we let firms make new offers if they wish, in the
next iteration. If no firm wants to make offers, the dynamic stops. Formally:

\medskip
\noindent{\bf The Set Offering  Algorithm}\\[0,2cm]
\noindent{\bf Input}\\[0,2cm]
A market $(\mathcal{F},\mathcal{W},\mathcal{P})$ and a matching $\mu$ \\[0,2cm]
{\bf Initialization}
\begin{itemize}
\item[(a)] $\mu^0=\mu$ y $i=1$.
\item[(b)]  For all $f \in  \mathcal{F}$ set $A_f^0=\{w\in \mathcal{W}:w\not\in \mu^0(f)$ and there exists $S\in 2^\mathcal{W}$ such that $|S|<q_f$ and $S\cup \{w\}P(f) S\}$.
\end{itemize}
{\bf Main Iteration} 
\begin{itemize}
\item[(1)] For all  $f\in \mathcal{F}$, define  $S^{i-1}_f=Ch(A_f^{i-1}\cup \mu^{i-1}(f),P(f))\setminus \mu^{i-1}(f)$.
\item[(2)] If there is no  $f\in  \mathcal{F}$ such that  $S_f^{i-1}\not=\emptyset$, stop with output $\mu^{i-1}$; otherwise  each firm $f$ makes offers to workers in $S_f^{i-1}$.
\item[(3)] For each  $w\in \mathcal{W}$ who received an offer at step $(2)$, let  $T_w^{i-1}=\{f\in  \mathcal{F}: f=\mu^{i-1}(w)$ o $w\in S^{i-1}_f\}$ and define $\mu^i(w)$ as the  $P(w)$-most preferred element in  $T_w^{i-1}$.
\item[]  For all  $w'\in \mathcal{W}$ who did not received any offer at step (2), $\mu^i(w')=\mu^{i-1}(w')$.
\item[(4)] Finally let  $A_f^i=A_f^{i-1}\setminus S^{i-1}_f$ or all firm $f$.
\item[(5)]$i=i+1$, go to $(1)$.
\end{itemize}

Let  $SO(\mu)$ be a the output matching of the SO algorithm  with input  mat\-ching $\mu$ and $[SO(\mu)](w)$ the match of worker $w$ at  $SO(\mu)$. 
%Algunas veces agregaremos supraindices  y escribiremos, por ejempo: $OS^\mathbf{P}(\mu)$, para  enfatizar la dependencia sobre un modelo particular. 

\medskip

%Cantala (2003) mostr\'o que el  algoritmo $(SO)$ restablece cualquier asignaci\'on cuasi-estable para las firmas. Formalmente: 

\begin{theorem}\label{of2}
Let $\mu$ be a matching in the market $(\mathcal{F},\mathcal{W},\mathcal{P})$. If $\mu$ is firm quasi-stable, then so are all intermediate matchings along the SO Algorithm; in particular, the output matching $SO(\mu)$ is stable.\footnote{Cantala (2004) uses the condition of $q$-substitutability on the preferences of the firms.}
\end{theorem}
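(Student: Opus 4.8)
The plan is to prove both assertions together by induction on the iteration counter $i$, keeping firm quasi-stability as the inductive hypothesis and reading off stability of $SO(\mu)$ from the stopping rule. Before the induction I would record two auxiliary facts about the choice sets $Ch(\cdot,P(f))$ that are forced by substitutability and $q_f$-separability and that I will use repeatedly. The first is the consistency (``irrelevance of rejected workers'') property: since $Ch(X,P(f))$ is the $P(f)$-maximal subset of $X$, whenever $Ch(X,P(f))\subseteq Y\subseteq X$ one has $Ch(Y,P(f))=Ch(X,P(f))$. The second is the monotone form of substitutability: if $v\in Ch(X,P(f))$ and $v\in Y\subseteq X$ then $v\in Ch(Y,P(f))$; together with $q_f$-separability this also guarantees that workers $w$ with $\emptyset P(f)w$ are never chosen and never displace a chosen worker, so only ``good'' workers matter below.

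Next I would establish the dynamics of the run. From Step $(3)$ of the algorithm each worker selects the $P(w)$-best firm among its current match and the firms offering to it, so workers' assignments improve weakly: $\mu^{i}\succeq_\mathcal{W}\mu^{i-1}$, and hence $f\,P(w)\,\mu^{i}(w)$ implies $f\,P(w)\,\mu^{j}(w)$ for every $j\le i$. This yields the structural fact on which everything hinges: if $w$ is good for $f$, is unmatched to $f$ at $\mu^{i}$, and prefers $f$ to $\mu^{i}(w)$, then $f$ can never have made $w$ an offer (an offer would have been accepted and, by weak improvement, never given up), so $w$ was never removed from $A_f$ and therefore $w\in A_f^{i}$. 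In particular the set $W_{f,\mu^{i}}$ of good workers who envy $f$ is contained in $A_f^{i}$, and a short computation (using $\mu^{i}(f)\subseteq\mu^{i-1}(f)\cup S_f^{i-1}$ and $S_f^{i-1}\subseteq A_f^{i-1}$) gives the nesting $A_f^{i}\cup\mu^{i}(f)\subseteq A_f^{i-1}\cup\mu^{i-1}(f)$.

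With these in hand I would run the induction. The base case is the hypothesis that $\mu^{0}=\mu$ is firm quasi-stable. For the inductive step I would first check individual rationality of $\mu^{i}$: for workers it follows from weak improvement and the base case, and for firms from the fact that every worker in $\mu^{i}(f)$ is good for $f$ (entrants lie in a choice set, retained workers were good by hypothesis) together with $q_f$-separability and the quota bound $|\mu^{i}(f)|\le q_f$. The heart of the matter is the ``no-firing'' clause: for $S\subseteq W_{f,\mu^{i}}$ I must show $\mu^{i}(f)\subseteq Ch(\mu^{i}(f)\cup S,P(f))$. After discarding bad workers via $q_f$-separability I may assume $S\subseteq A_f^{i}$. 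If some $v\in\mu^{i}(f)$ were dropped from $Ch(\mu^{i}(f)\cup S,P(f))$, the monotone substitutability property would force $v\notin Ch(A_f^{i}\cup\mu^{i}(f),P(f))$, and I would trace $v$ back through the offer that placed it at $f$, using the nesting above and consistency to contradict either the firm quasi-stability of $\mu^{i-1}$ or the fact that $f$ chose $v$ when it was hired. I expect this to be the main obstacle, precisely because $A_f^{i}$ still contains good workers who do not want $f$ and who, in a purely static choice comparison, could displace a current worker; the argument must lean on the dynamics—such workers are eventually offered and purged, while the genuinely envious good workers are exactly those $f$ passed over at every stage—rather than on a static property of $\mu^{i}$ alone.

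Finally I would prove termination and deduce stability. Each non-terminating iteration removes the nonempty set $S_f^{i-1}$ from some $A_f$, so the finitely many available sets strictly shrink and the process stops, say at $\mu^{*}=SO(\mu)$, which is firm quasi-stable by the induction. At termination $S_f^{*}=\emptyset$ for every $f$, i.e. $Ch(A_f^{*}\cup\mu^{*}(f),P(f))=\mu^{*}(f)$. If a pair $(w,f)$ blocked $\mu^{*}$, then $f\,P(w)\,\mu^{*}(w)$, $w\notin\mu^{*}(f)$, and $w\in Ch(\mu^{*}(f)\cup\{w\},P(f))$; the last condition forces $w$ to be good for $f$, so by the structural fact $w\in A_f^{*}$, whence $\mu^{*}(f)\cup\{w\}\subseteq A_f^{*}\cup\mu^{*}(f)$ and $Ch(A_f^{*}\cup\mu^{*}(f),P(f))=\mu^{*}(f)\subseteq\mu^{*}(f)\cup\{w\}$. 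Consistency then gives $Ch(\mu^{*}(f)\cup\{w\},P(f))=\mu^{*}(f)$, so $w\notin Ch(\mu^{*}(f)\cup\{w\},P(f))$, contradicting the block. Hence $SO(\mu)$ is stable.
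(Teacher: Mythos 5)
The paper itself gives no proof of this theorem; it is imported from Cantala (2004), so your attempt has to stand on its own. Its outer architecture is sound: the weak-improvement property of workers' matches, the structural fact that a good worker who envies $f$ under $\mu^{i}$ is still in $A_f^{i}$, termination via the shrinking sets $A_f$, and the derivation of stability of the output from the stopping rule are all correct and are exactly the right ingredients. But the central claim --- that firm quasi-stability propagates from $\mu^{i-1}$ to $\mu^{i}$ --- is never actually proved: you explicitly defer it (``I would trace $v$ back through the offer\dots'', ``I expect this to be the main obstacle''). Worse, the one concrete deduction you do make points at a dead end: from $v\in\mu^{i}(f)$ and $v\notin Ch(\mu^{i}(f)\cup S,P(f))$ you infer $v\notin Ch(A_f^{i}\cup\mu^{i}(f),P(f))$, but that statement is not in conflict with anything --- as you yourself observe, $A_f^{i}$ legitimately contains good workers who do not want $f$, so a run can reach a perfectly firm quasi-stable $\mu^{i}$ with $\mu^{i}(f)\not\subseteq Ch(A_f^{i}\cup\mu^{i}(f),P(f))$. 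No contradiction can be extracted from that inference. A second, related gap: the quota bound $|\mu^{i}(f)|\le q_f$ that you invoke for individual rationality is precisely what could fail if quasi-stability did not propagate (a firm whose offers are accepted while it keeps all incumbents could end up over quota), so it cannot simply be assumed before the no-firing property is established.

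The missing step can be closed with your own structural fact plus revealed preference, with no trace-back of $v$ at all. Write $N_f=\mu^{i}(f)\setminus\mu^{i-1}(f)$; every $w\in N_f$ accepted an offer, so $N_f\subseteq S_f^{i-1}\subseteq Ch(A_f^{i-1}\cup\mu^{i-1}(f),P(f))$ and $f\,P(w)\,\mu^{i-1}(w)$, i.e.\ $N_f\subseteq W_{f,\mu^{i-1}}$. Take a nonempty $S\subseteq W_{f,\mu^{i}}$; after discarding bad workers (via $q_f$-separability and consistency, as in your auxiliary facts) you may assume $S\subseteq A_f^{i}\subseteq A_f^{i-1}$ by the structural fact, and $S\subseteq W_{f,\mu^{i-1}}$ by weak improvement. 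Firm quasi-stability of $\mu^{i-1}$ applied to $S\cup N_f$ gives $\mu^{i-1}(f)\subseteq Ch(\mu^{i-1}(f)\cup N_f\cup S,P(f))$; and since $\mu^{i-1}(f)\cup N_f\cup S\subseteq A_f^{i-1}\cup\mu^{i-1}(f)$, your monotone substitutability applied to $N_f\subseteq Ch(A_f^{i-1}\cup\mu^{i-1}(f),P(f))$ gives $N_f\subseteq Ch(\mu^{i-1}(f)\cup N_f\cup S,P(f))$ as well. Thus all of $\mu^{i-1}(f)\cup N_f$ survives in that choice set; the same two observations with $S$ deleted show $\mu^{i-1}(f)\cup N_f=Ch(\mu^{i-1}(f)\cup N_f,P(f))$, which delivers both individual rationality and the quota bound for $\mu^{i}(f)\subseteq\mu^{i-1}(f)\cup N_f$. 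Finally, since $\mu^{i}(f)\cup S\subseteq\mu^{i-1}(f)\cup N_f\cup S$, one more application of monotone substitutability to each $v\in\mu^{i}(f)$ yields $\mu^{i}(f)\subseteq Ch(\mu^{i}(f)\cup S,P(f))$, which is the inductive step. With this inserted, your termination and end-of-run stability arguments go through essentially as written.
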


In  order  to  characterize  the  output  of  the  SO  Algorithm   we  need  the following two lemmas, that records properties of the output of the SO Algorithm in terms of the input matching.

%Consideramos hasta finalizar esta secci\'on un modelo de asignaci\'on $(F,W,\mathbf{P})$ con $\mathbf{P}$ un perfil de preferencias sustituible y $q-$separable.   

\begin{lemma}\label{1}
Let  $\mu'\in FQS(\mathcal{F},\mathcal{W},\mathcal{P})$. Then  $SO(\mu') \succeq_\mathcal{W} \mu'$.
\end{lemma}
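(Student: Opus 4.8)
The plan is to prove the stronger, iteration-wise statement that the SO Algorithm is monotone for the workers, and then to chain the resulting comparisons. Write $\mu^0=\mu',\mu^1,\dots,\mu^k=SO(\mu')$ for the finite sequence of tentative matchings produced by the algorithm on input $\mu'$ (finiteness holds because $S_f^{i-1}\subseteq A_f^{i-1}$, so $A_f^i=A_f^{i-1}\setminus S_f^{i-1}$ strictly shrinks at every firm with $S_f^{i-1}\neq\emptyset$, and the halting test guarantees at least one such firm whenever the process continues). I would first establish the claim that $\mu^i\succeq_\mathcal{W}\mu^{i-1}$ for every iteration $i$. Granting this, transitivity of each worker's weak preference $R(w)$ yields $[SO(\mu')](w)=\mu^k(w)\,R(w)\,\mu^{k-1}(w)\,R(w)\cdots R(w)\,\mu^0(w)=\mu'(w)$ for every $w\in\mathcal{W}$, which is exactly $SO(\mu')\succeq_\mathcal{W}\mu'$.

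To prove the per-iteration claim $\mu^i\succeq_\mathcal{W}\mu^{i-1}$, fix a worker $w$ and argue by cases according to step (3). If $w$ received no offer in iteration $i$, then $\mu^i(w)=\mu^{i-1}(w)$ and there is nothing to prove. If $w$ received an offer and was already matched under $\mu^{i-1}$, then its current firm satisfies $\mu^{i-1}(w)\in T_w^{i-1}$ by the clause $f=\mu^{i-1}(w)$ in the definition of $T_w^{i-1}$; since $\mu^i(w)$ is chosen as the $P(w)$-most preferred element of $T_w^{i-1}$, we get $\mu^i(w)\,R(w)\,\mu^{i-1}(w)$ at once.

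The remaining and only delicate case is a worker $w$ who is unmatched under $\mu^{i-1}$, i.e. $\mu^{i-1}(w)=\emptyset$, and who receives an offer. Here $\emptyset$ is not literally an element of $T_w^{i-1}\subseteq\mathcal{F}$, so the most-preferred-element selection does not by itself guarantee that $w$ fails to end up worse off than remaining single. This is where I would invoke Theorem \ref{of2}: since $\mu'$ is firm quasi-stable, every intermediate matching, in particular $\mu^i$, is firm quasi-stable and hence individually rational, so $\mu^i(w)\,R(w)\,\emptyset=\mu^{i-1}(w)$. This closes the last case and establishes $\mu^i(w)\,R(w)\,\mu^{i-1}(w)$ for all $w$, that is, $\mu^i\succeq_\mathcal{W}\mu^{i-1}$.

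The main obstacle is precisely this unmatched-worker case: for matched workers the monotonicity is immediate from the choice over $T_w^{i-1}$, but for an unemployed worker the literal selection rule of step (3) offers no built-in safeguard against accepting an unacceptable firm, and one must lean on the individual rationality supplied by Theorem \ref{of2} rather than on the mechanics of the offer step alone.
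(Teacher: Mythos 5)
Your proposal is correct and follows essentially the same route as the paper's proof: establish the per-iteration monotonicity $\mu^i \succeq_\mathcal{W} \mu^{i-1}$ (new offers can only improve a matched worker's position since $\mu^{i-1}(w)\in T_w^{i-1}$, and workers without offers keep their match) and then chain these comparisons by transitivity. In fact your treatment is slightly more careful than the paper's: the paper disposes of every offer-receiving worker with the phrase ``by definition of $T_w^{i-1}$'' (and even asserts strict preference where only weak preference holds), whereas you correctly isolate the unmatched-worker case, where $\emptyset\notin T_w^{i-1}$ makes that appeal insufficient, and close it legitimately via the individual rationality of the intermediate matchings guaranteed by Theorem \ref{of2} --- a result the paper itself treats as a black box and already invokes in the proof of Lemma \ref{2}.
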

\begin{proof}
Let $\mu'=\mu^0,\ \mu^1,...,\ \mu^k$,   be the sequence of distinct matchings gene\-rated by  the SO Algorithm with input $\mu'$ and let $w'$ be a  worker who received an offer at step 2 of the main iteration
in which  $\mu^i$  is generated. Then $ \mu^i(w)P(w)\mu^{i-1}(w)$ by definition of $T_w^{i-1}$ and for  all $w\in \mathcal{W}$ did  not  receive  any  offer  at  step  2  $ \mu^i(w)=\mu^{i-1}(w)$, hence $\mu^i\succeq_\mathcal{W} \mu^{i-1}$. By iterating this inequality we conclude that $\mu'=\mu^0 \preceq_\mathcal{W} \mu^1 \preceq_\mathcal{W}...\preceq_\mathcal{W} \mu^k=SO(\mu')$.
\end{proof}

%\begin{Lema}\label{3}
%Sean $({\cal F '},W',q',\succ')$ $\rightarrow$ $({\cal F },W,q,\succ)$, con ${\cal F '}\subseteq F$ y $W\subseteq W'$, $\mu'\in S({\cal F'},W',q',\succ')$ y $\mu=SO(\mu')$. Entonces $\mu \preceq_{{\cal F '}}^B \mu'$
%\end{Lema}
%\begin{Demostracion}
%Supongamos que existe $f \in {\cal F'}$ tal que $Ch_f(\mu(f) \cup \mu'(f))\not =\mu'(f)$. Como  $({\cal F '},W',q',\succ')$ $\rightarrow$ $({\cal F },W,q,\succ)$ entonces $\mu'\in FQS({\cal F},W,q,\succ)$. Luego $\mu'$ es individualmente racional, entonces existe $w \in \mu(f)\setminus \mu'(f)$ tal que  $w \in Ch_f(\mu(f) \cup \mu'(f))$ , luego por sustituibilidad tenemos que $w \in Ch_f(\mu'(f) \cup \{w\})$.
%Luego por el lema 1 $f = \mu (w)\succeq_W \mu'(w)$ , como $ w \not \in \mu'(f)$ entonces $\mu'(w)\not=f$. Por lo tanto $(f,w)$ bloquea a $\mu'$ en $({\cal F '},W',q',\succ')$.
%\end{Demostracion}

\begin{lemma}\label{2}
Let $\mu' \in FQS(\mathcal{F},\mathcal{W},\mathcal{P})$ and  let $\mu^i$ be the i-th matchings generated by  the SO Algorithm with input $\mu'$ . If $\mu \in S(\mathcal{F},\mathcal{W},\mathcal{P})$ and $\mu \succeq_\mathcal{W} \mu'$, then $\mu \succeq_\mathcal{W} \mu^i$.
\end{lemma}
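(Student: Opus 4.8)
The plan is to argue by induction on $i$, the base case $i=0$ being exactly the hypothesis $\mu \succeq_\mathcal{W} \mu'=\mu^0$. For the inductive step I would assume $\mu \succeq_\mathcal{W} \mu^{i-1}$ and suppose, toward a contradiction, that some worker $w$ has $\mu^i(w)\,P(w)\,\mu(w)$. Since any worker who receives no offer at the $i$-th iteration keeps its partner, and since the monotonicity established in Lemma \ref{1} gives $\mu(w)\,R(w)\,\mu^{i-1}(w)$, such a $w$ must have received and accepted an offer from $f:=\mu^i(w)$; that is, $w\in S_f^{i-1}$, and in particular $w\in Ch(A_f^{i-1}\cup\mu^{i-1}(f),P(f))$. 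The goal is then to show that the pair $(w,f)$ blocks $\mu$, contradicting stability. Two of the three blocking conditions are immediate: $f\,P(w)\,\mu(w)$ holds by assumption, and $w\notin\mu(f)$ follows because $f\,P(w)\,\mu(w)$ forces $\mu(w)\neq f$.

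The crux is the remaining condition $w\in Ch(\mu(f)\cup\{w\},P(f))$, and the main obstacle is relating the stable workforce $\mu(f)$ to the set $A_f^{i-1}\cup\mu^{i-1}(f)$ from which $f$ actually chose $w$. I would prove the key inclusion $\mu(f)\subseteq A_f^{i-1}\cup\mu^{i-1}(f)$. Fix $v\in\mu(f)$, so $\mu(v)=f$, and suppose $v\notin\mu^{i-1}(f)$; the induction hypothesis then gives $f=\mu(v)\,R(v)\,\mu^{i-1}(v)$, and since $\mu^{i-1}(v)\neq f$ this is strict, namely $f\,P(v)\,\mu^{i-1}(v)$. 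I would then show $v\in A_f^{i-1}$ by excluding the two ways $v$ could have been removed from $A_f$. If $v$ had been offered by $f$ at an earlier iteration, or had started in $\mu^0(f)$ and subsequently left, then in either case monotonicity (Lemma \ref{1}) yields $\mu^{i-1}(v)\,R(v)\,f$, contradicting $f\,P(v)\,\mu^{i-1}(v)$; and $v$ cannot fail to lie in $A_f^0$ for acceptability reasons, since $v\in\mu(f)$ with $\mu$ individually rational forces $v\,P(f)\,\emptyset$ through $q_f$-separability. Hence $v\in A_f^{i-1}$, proving the inclusion (and $w$ itself lies in $A_f^{i-1}\cup\mu^{i-1}(f)$, since $w\in S_f^{i-1}$).

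With $\mu(f)\cup\{w\}\subseteq A_f^{i-1}\cup\mu^{i-1}(f)$ established, I would finish by invoking substitutability of $P(f)$: since $w\in Ch(A_f^{i-1}\cup\mu^{i-1}(f),P(f))$ and deleting the remaining workers of $A_f^{i-1}\cup\mu^{i-1}(f)$ one at a time preserves membership of $w$ in the choice set, it follows that $w\in Ch(\mu(f)\cup\{w\},P(f))$. This supplies the last blocking condition, so $(w,f)$ blocks $\mu$, contradicting its stability; therefore no such $w$ exists and $\mu\succeq_\mathcal{W}\mu^i$. I expect the bookkeeping in the inclusion $\mu(f)\subseteq A_f^{i-1}\cup\mu^{i-1}(f)$ — in particular the case split on how a worker of $\mu(f)$ might have been dropped from the availability set — to be the delicate part of the argument.
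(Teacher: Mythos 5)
Your proposal is correct and takes essentially the same route as the paper's proof: induction on $i$, a contradiction built from a worker $w$ with $\mu^i(w)\,P(w)\,\mu(w)$, the key inclusion $\mu(f)\cup\{w\}\subseteq A_f^{i-1}\cup\mu^{i-1}(f)$ obtained from the inductive hypothesis plus the monotonicity of the algorithm (covering both the ``never offered'' and the ``started at $f$ and left'' cases), and substitutability of $P(f)$ to conclude $w\in Ch(\mu(f)\cup\{w\},P(f))$, producing a blocking pair for $\mu$. The only cosmetic differences are that you read $w\in Ch(A_f^{i-1}\cup\mu^{i-1}(f),P(f))$ directly off $w\in S_f^{i-1}$ where the paper first records $w\in A_f^{i-1}$, and that the relation $\mu(w)\,R(w)\,\mu^{i-1}(w)$ you attribute to Lemma \ref{1} is in fact the inductive hypothesis (Lemma \ref{1} supplies $\mu^{i-1}\succeq_\mathcal{W}\mu'$); the substance is identical.
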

\begin{proof}
Let $\mu'=\mu^0,\ \mu^1,...,\ \mu^k$,   be the sequence of distinct matchings generated by  the SO Algorithm with input $\mu'$. We will show, by induction, that $\mu \succeq_\mathcal{W} \mu^i$ for each $i=0,...,k$. For $i=0$ the conclusion follows from the assumption  $\mu \succeq_\mathcal{W} \mu'$. Assume, by inductive hypothesis,   that $\mu(w)P(w) \mu^{i-1}(w)$ for all $w \in \mathcal{W}$ and consider  $\mu^i$. To get a contradiction, assume that there exits  $w'\in \mathcal{W}$ such that $\mu^i(w')P(w') \mu(w')$.
Consequently $\mu^i(w')=f'\in \mathcal{F}$, we get from $\mu(w')P(w')\mu^{i-1}(w')$ (our inductive assumption) and $\mu^{i-1}(w')P(w')\mu'(w')$ (established in the proof of Lemma \ref{1}) that  $f'=\mu^i(w')P(w')\mu'(w')$. Moreover, $w'\in Ch(\mu^i(f'),P(f'))$ (since  $w'\in \mu^i(f')$ and by Theorem \ref{of2} $\mu^i$ is firm quasi-stable matching), thus $w'\in A_{f'}^0$.
As $f'P(w')\mu^{i-1}(w')$, then $f'$ does not make an offer to $w'$ until iteration $i-1$. 
%(de lo contrario $w'$ hubiese aceptado la oferta), 
We conclude that
\begin{center}
 \hspace{4cm} $w'\in A_{f'}^{i-1}$    \hfill(3)
\end{center}
We claim that the following inclusion
\begin{center}
 \hspace{3cm} $ \mu(f_2)\subseteq   A_{f'}^{i-1}\cup \mu^{i-1}(f')$  \hfill (4) 
\end{center}
holds. Let  $w\in \mu(f')$ such that $w\not\in \mu^{i-1}(f')$. Since $f'P(w)\mu^{i-1}(w)$ (our inductive assumption) and $\mu^{i-1}(w)P(w)\mu'(w)$ (established in the proof of Lemma \ref{1}), we have that   $f'P(w)\mu'(w)$ and by stability of   $\mu$    $w\in Ch(\mu(f'),P(f'))$, then  $w\in A_{f'}^0$. Moreover,   since $f'P(w) \mu^{i-1}(w)$, then    $f'$ does not make an offer to  $w$ until iteration $i-1$. %(de lo contrario $w$ hubiese aceptado la oferta). 
We conclude that $w\in A_{f'}^{i-1}$.   

\smallskip
 
By (3) and (4) $w'\cup \mu(f') \subseteq A_{f'}^{i-1}\cup \mu^{i-1}(f')$. Since $\mu^i(w')=f'$ then $w'\in Ch(A_{f'}^{i-1}\cup \mu^{i-1}(f'),P({f'}))$ and by the substitutability of $P(f')$   $w' \in Ch(w'\cup \mu(f'),P({f'}))$ . As $f'=\mu^i(w')P(w')\mu(w')$ we have that $(w',f')$ is a blocking pair for $\mu$, contradicting the assumption that $\mu$ is stable.
\end{proof}

\medskip

We next show that the output of every execution of the SO Algorithm with input a firm quasi-stable matching $\mu'$, is the worker-worst
stable matching in $S_\mathcal{W}(\mu')$.
%una caracterizaci\'on del resultado  de la  ejecuci\'on del    algoritmo $(SO)$ cuando la entrada pertenece a la clase de asignaciones  firma-cuasi-estables.

\begin{theorem}\label{os}
Let $\mu'\in FQS(\mathcal{F},\mathcal{W},\mathcal{P})$. Then \begin{itemize}
    \item [1.] $SO(\mu')\in S_\mathcal{W}(\mu')$.
    \item [2.] If $\mu\in S_\mathcal{W}(\mu')$, then $\mu\succeq_\mathcal{W} SO(\mu')$
\end{itemize}
%$(SO)(\mu')$ es la peor asignaci\'on para los trabajadores entre las asignaciones estables que prefieren d\'ebilmente a $\mu'$.
\end{theorem}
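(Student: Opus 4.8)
The plan is to derive both parts directly from the two preceding lemmas together with Theorem~\ref{of2}, so that essentially no new argument is needed; the theorem is a repackaging of the properties already established for the SO Algorithm.

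For part 1, I would verify the two defining conditions of membership in $S_\mathcal{W}(\mu')$ separately. Stability of $SO(\mu')$ is immediate from Theorem~\ref{of2}: since $\mu'$ is firm quasi-stable by hypothesis, the output of the SO Algorithm on a firm quasi-stable input is stable, so $SO(\mu')\in S(\mathcal{F},\mathcal{W},\mathcal{P})$. That $SO(\mu')\succeq_\mathcal{W}\mu'$ is exactly the content of Lemma~\ref{1}. Combining these gives a stable matching weakly preferred by every worker to $\mu'$, which is precisely the definition of $SO(\mu')\in S_\mathcal{W}(\mu')$.

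For part 2, I would take an arbitrary $\mu\in S_\mathcal{W}(\mu')$, so that $\mu$ is stable and $\mu\succeq_\mathcal{W}\mu'$. These are exactly the hypotheses of Lemma~\ref{2}, which yields $\mu\succeq_\mathcal{W}\mu^i$ for every intermediate matching $\mu^i$ generated by the SO Algorithm with input $\mu'$. Applying this with $i=k$, where $\mu^k=SO(\mu')$ is the terminal matching of the sequence, gives the desired inequality $\mu\succeq_\mathcal{W} SO(\mu')$.

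Since both parts reduce immediately to the prior results, I do not expect any genuine obstacle here; the real work was done earlier in proving Lemmas~\ref{1} and~\ref{2}. The only point requiring care is to identify $SO(\mu')$ explicitly as the last element $\mu^k$ of the sequence to which Lemma~\ref{2} applies, ensuring the conclusion of part 2 concerns the output matching and not merely an intermediate stage. Together, parts 1 and 2 identify $SO(\mu')$ as the $\succeq_\mathcal{W}$-least element of $S_\mathcal{W}(\mu')$, that is, the worker-worst stable matching among those all workers weakly prefer to $\mu'$.
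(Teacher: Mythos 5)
Your proof is correct and follows exactly the paper's route: the paper also derives part 1 from Theorem~\ref{of2} together with Lemma~\ref{1}, and part 2 from Lemma~\ref{2} applied to the terminal matching, dismissing the whole theorem as ``immediate'' from those three results. Your write-up simply makes explicit the assembly the paper leaves to the reader.
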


\begin{proof} The result is immediate from Theorem \ref{of2} and Lemmas \ref{1} and \ref{2}.
%Por el Lema \ref{1} $(SO)(\mu')\succeq_W \mu'$. Sea $\mu^* \in S( F,W,\mathbf{P})$ tal que $\mu^* \succeq_W \mu'$. Sean $\mu'=\mu^0,\mu^1,...,\mu^k=(SO)(\mu')$, la sucesi\'on de las distintas asignaciones  generadas por la ejecuci\'on del algoritmo $(SO)$ con entrada $\mu'$. \\
%Por el Lema \ref{2} $\mu^*\succeq_W \mu^i$, para todo $i=1,...,k$, en  particular $\mu^*\succeq_W \mu^k=(SO)(\mu')$. Por lo tanto $(SO)(\mu')$ es la peor asignaci\'on para los trabajadores entre las asignaciones estables que prefieren d\'ebilmente a $\mu'$. 
\end{proof}

\medskip

As $S_\mathcal{W}(\mu')$ is a nonempty sub-lattice of $S(\mathcal{F},\mathcal{W},\mathcal{P})$ thus, 
contains the matching $\barwedge_\mathcal{W}S_\mathcal{W}(\mu')$, which is the worker-worst stable matching in  $S_\mathcal{W}(\mu')$. We next show that this stable matching is the output  of the SO Algorithm with input $\mu'$.

\begin{theorem}\label{15}
Let  $\mu'\in FQS(\mathcal{F},\mathcal{W},\mathcal{P})$. Then  $SO(\mu')= \barwedge_\mathcal{W} S_\mathcal{W}(\mu')$.
\end{theorem}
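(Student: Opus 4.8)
The plan is to observe that Theorem \ref{os} has already done all the substantive work: it identifies $SO(\mu')$ as the worker-worst element of $S_\mathcal{W}(\mu')$, and the only thing left is to recognize that, inside the finite lattice $S(\mathcal{F},\mathcal{W},\mathcal{P})$, this worker-worst element is precisely the set-meet $\barwedge_\mathcal{W} S_\mathcal{W}(\mu')$. Thus the task reduces to translating ``minimum element of the subset'' into ``meet of the subset'' and closing the argument by antisymmetry of the partial order $\succeq_\mathcal{W}$.

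First I would invoke Lemma \ref{3} to guarantee that $S_\mathcal{W}(\mu')$ is a nonempty sublattice of $S(\mathcal{F},\mathcal{W},\mathcal{P})$. Because the market has finitely many agents there are only finitely many matchings, so the iterated meet $m := \barwedge_\mathcal{W} S_\mathcal{W}(\mu')$ is well-defined, and since $S_\mathcal{W}(\mu')$ is a sublattice closed under $\barwedge_\mathcal{W}$, this meet lies in $S_\mathcal{W}(\mu')$; it is the worker-worst stable matching of the subset. In particular $m$ is a lower bound: $\nu \succeq_\mathcal{W} m$ for every $\nu \in S_\mathcal{W}(\mu')$, and $m \in S_\mathcal{W}(\mu')$.

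Next I would apply Theorem \ref{os}. Part~1 gives $SO(\mu') \in S_\mathcal{W}(\mu')$, and part~2 gives $\nu \succeq_\mathcal{W} SO(\mu')$ for every $\nu \in S_\mathcal{W}(\mu')$, so $SO(\mu')$ is likewise a lower bound of $S_\mathcal{W}(\mu')$ that belongs to the set. I would then conclude by antisymmetry of $\succeq_\mathcal{W}$. Since $m \in S_\mathcal{W}(\mu')$ and $SO(\mu')$ is a lower bound of $S_\mathcal{W}(\mu')$, we get $m \succeq_\mathcal{W} SO(\mu')$. Symmetrically, since $SO(\mu') \in S_\mathcal{W}(\mu')$ and $m$ is a lower bound, we get $SO(\mu') \succeq_\mathcal{W} m$. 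Antisymmetry then forces $SO(\mu') = m = \barwedge_\mathcal{W} S_\mathcal{W}(\mu')$, as claimed.

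There is essentially no hard step of substance here: the whole statement is an immediate corollary of Theorem \ref{os} together with Lemma \ref{3}. The only point requiring care is the bookkeeping that the set-meet of a finite nonempty sublattice coincides with its minimum element, so that the two lower-bound inequalities coming from Theorem \ref{os} and from the definition of $\barwedge_\mathcal{W} S_\mathcal{W}(\mu')$ can be combined through antisymmetry; this is where I expect a reader might want the uniqueness of a minimum element spelled out explicitly.
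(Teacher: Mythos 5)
Your proposal is correct and follows exactly the paper's route: the paper also derives this result as an immediate consequence of Theorem \ref{os}, with Lemma \ref{3} (quoted in the text just before the theorem) supplying the fact that $S_\mathcal{W}(\mu')$ is a nonempty sublattice whose meet $\barwedge_\mathcal{W} S_\mathcal{W}(\mu')$ is its worker-worst element. You merely spell out the antisymmetry bookkeeping that the paper leaves implicit in the word ``immediate.''
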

\begin{proof}
The result is immediate from Theorem \ref{os}.
\end{proof}

\medskip

The next theorem gives a simple representation of the output of the $SO$ Algorithm  on firm-quasi-stable matchings in terms of the input matching and the firm-optimal stable matching  $\mu_{ F}$.

\begin{theorem}\label{supremo}
Let $\mu'\in FQS(\mathcal{F},\mathcal{W},\mathcal{P})$. Then $SO(\mu')=\mu'\veebar_\mathcal{W} \mu_{\mathcal{F}}$.
\end{theorem}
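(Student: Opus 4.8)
The plan is to identify $\mu'\veebar_\mathcal{W}\mu_{\mathcal{F}}$ as the $\succeq_\mathcal{W}$-minimal (worker-worst) element of $S_\mathcal{W}(\mu')$ and then invoke the characterization $SO(\mu')=\barwedge_\mathcal{W}S_\mathcal{W}(\mu')$ from Theorem \ref{15}. First I would verify that $\mu'\veebar_\mathcal{W}\mu_{\mathcal{F}}$ actually lies in $S_\mathcal{W}(\mu')$. Since the pointing operator $\veebar_\mathcal{W}$ is symmetric in its two arguments, with $\mu_{\mathcal{F}}\in S(\mathcal{F},\mathcal{W},\mathcal{P})$ and $\mu'\in FQS(\mathcal{F},\mathcal{W},\mathcal{P})$, Theorem \ref{s} gives $\mu'\veebar_\mathcal{W}\mu_{\mathcal{F}}\in S(\mathcal{F},\mathcal{W},\mathcal{P})$. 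By the very definition of $\veebar_\mathcal{W}$, each worker $w$ is matched under $\mu'\veebar_\mathcal{W}\mu_{\mathcal{F}}$ to the $R(w)$-better of $\mu'(w)$ and $\mu_{\mathcal{F}}(w)$, so $(\mu'\veebar_\mathcal{W}\mu_{\mathcal{F}})(w)\,R(w)\,\mu'(w)$ for all $w$, i.e. $\mu'\veebar_\mathcal{W}\mu_{\mathcal{F}}\succeq_\mathcal{W}\mu'$. Hence $\mu'\veebar_\mathcal{W}\mu_{\mathcal{F}}\in S_\mathcal{W}(\mu')$.

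Next I would show this matching is minimal in $S_\mathcal{W}(\mu')$. Let $\mu\in S_\mathcal{W}(\mu')$ be arbitrary. By definition of $S_\mathcal{W}(\mu')$ we have $\mu\succeq_\mathcal{W}\mu'$, and since $\mu$ is stable while $\mu_{\mathcal{F}}$ is the worker-worst stable matching (Roth's result recalled in Section 2), also $\mu\succeq_\mathcal{W}\mu_{\mathcal{F}}$. Thus for every worker $w$, $\mu(w)$ is weakly $R(w)$-preferred to both $\mu'(w)$ and $\mu_{\mathcal{F}}(w)$, hence to their $R(w)$-better, which is precisely $(\mu'\veebar_\mathcal{W}\mu_{\mathcal{F}})(w)$. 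Therefore $\mu\succeq_\mathcal{W}\mu'\veebar_\mathcal{W}\mu_{\mathcal{F}}$. Combined with the previous paragraph, $\mu'\veebar_\mathcal{W}\mu_{\mathcal{F}}$ is the worker-worst element of $S_\mathcal{W}(\mu')$, that is $\mu'\veebar_\mathcal{W}\mu_{\mathcal{F}}=\barwedge_\mathcal{W}S_\mathcal{W}(\mu')$, and Theorem \ref{15} then yields $SO(\mu')=\mu'\veebar_\mathcal{W}\mu_{\mathcal{F}}$.

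I do not expect a serious obstacle, since the substantive work has already been carried out in Theorems \ref{s}, \ref{os}, and \ref{15}; the argument is essentially a bookkeeping assembly of these facts. The two points requiring care are (i) justifying that Theorem \ref{s} applies with the stable matching $\mu_{\mathcal{F}}$ and the firm-quasi-stable matching $\mu'$, which rests on the symmetry of $\veebar_\mathcal{W}$, and (ii) the appeal to $\mu_{\mathcal{F}}$ being worker-worst in order to conclude $\mu\succeq_\mathcal{W}\mu_{\mathcal{F}}$ for every stable $\mu$. Alternatively, one could bypass Theorem \ref{15} and argue by antisymmetry of $\succeq_\mathcal{W}$: applying Theorem \ref{os}(2) to $\mu=\mu'\veebar_\mathcal{W}\mu_{\mathcal{F}}\in S_\mathcal{W}(\mu')$ gives $\mu'\veebar_\mathcal{W}\mu_{\mathcal{F}}\succeq_\mathcal{W}SO(\mu')$, while the reverse inequality $SO(\mu')\succeq_\mathcal{W}\mu'\veebar_\mathcal{W}\mu_{\mathcal{F}}$ follows because $SO(\mu')$ is stable (Theorem \ref{of2}), hence dominates both $\mu'$ (by Lemma \ref{1}) and $\mu_{\mathcal{F}}$ (worker-worst), and therefore dominates their join; antisymmetry of $\succeq_\mathcal{W}$ then forces equality.
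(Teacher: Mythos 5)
Your proposal is correct and follows essentially the same route as the paper's own proof: both establish that $\mu'\veebar_\mathcal{W}\mu_{\mathcal{F}}$ lies in $S_\mathcal{W}(\mu')$ via Theorem \ref{s} and the definition of $\veebar_\mathcal{W}$, show it is worker-worst in $S_\mathcal{W}(\mu')$ using the fact that every stable $\mu$ satisfies $\mu\succeq_\mathcal{W}\mu_{\mathcal{F}}$, and conclude by Theorem \ref{15}. Your explicit justification that Theorem \ref{s} applies despite the order of its arguments (via the symmetry of $\veebar_\mathcal{W}$) is a detail the paper leaves implicit, but it does not constitute a different argument.
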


\begin{proof}
By Theorem \ref{s} and since $\mu'\veebar_\mathcal{W} \mu_{ \mathcal{F}}\succeq_\mathcal{W} \mu'$, we have that $\mu' \veebar_{\mathcal{W}} \mu_\mathcal{W} \in S_\mathcal{W}(\mu')$. Next, for $\mu\in S_\mathcal{W}(\mu') $ we have that $\mu \in S(\mathcal{F},\mathcal{W},\mathcal{P})$ y  $\mu\succeq_\mathcal{W} \mu'$, then  $\mu\succeq_\mathcal{W} \mu_{ \mathcal{F}}$, we conclude that $\mu\succeq_\mathcal{W} \mu'\veebar_\mathcal{W} \mu_{ \mathcal{F}}$. So, $\mu'\veebar_\mathcal{W} \mu_{\mathcal{F}}     $ is the worker-worst matching in $S_\mathcal{W}(\mu') $, that is $ \mu'\veebar_\mathcal{W} \mu_{\mathcal{F}}=\barwedge_\mathcal{W} S(\mu')$. By Theorem \ref{15}  $SO(\mu')=\mu'\vee_\mathcal{W} \mu_{{ \mathcal{F}}}$.
\end{proof}

\medskip

The next corollary provides another closed-form representation for the outcome of a worker under the output of the SO Algorithm.

\begin{corollary}\label{cor2}
Let $\mu'\in FQS(\mathcal{F},\mathcal{W},\mathcal{P})$ and $w \in  \mathcal{W}$. Then 
\begin{center}
$[SO(\mu')](w)=  \left\{\begin{tabular}{l} $\mu'(w)$ if $\mu'(w)$ is achievable for $w$ \\ $\mu_F(w)$ if $\mu'(w)$ is not achievable for $w$. \end{tabular}\right.$
\end{center}
\end{corollary}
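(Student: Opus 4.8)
The plan is to reduce everything to the closed form already established in Theorem~\ref{supremo}, namely $SO(\mu') = \mu' \veebar_{\mathcal{W}} \mu_{\mathcal{F}}$, and then to unfold the definition of the pointing function $\veebar_{\mathcal{W}}$ at the single worker $w$. By that definition, $[SO(\mu')](w) = \mu'(w)$ whenever $\mu'(w) P(w) \mu_{\mathcal{F}}(w)$, and $[SO(\mu')](w) = \mu_{\mathcal{F}}(w)$ otherwise. So the entire claim amounts to deciding, for each fixed $w$, which of the two firms $\mu'(w)$ and $\mu_{\mathcal{F}}(w)$ the worker $w$ ranks higher, and matching this to the achievability dichotomy of Corollary~\ref{lema5}.

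First I would treat the case in which $\mu'(w)$ is achievable for $w$. Recall from Roth's result quoted in the Preliminaries that $\mu_{\mathcal{F}}$, being firm-optimal, is simultaneously the worker-worst stable matching; hence $\mu \succeq_{\mathcal{W}} \mu_{\mathcal{F}}$ for every stable $\mu$, and so any firm achievable for $w$ is weakly $P(w)$-preferred to $\mu_{\mathcal{F}}(w)$. In particular $\mu'(w) R(w) \mu_{\mathcal{F}}(w)$. If the preference is strict, the first branch of $\veebar_{\mathcal{W}}$ fires and gives $[SO(\mu')](w) = \mu'(w)$; if instead $\mu'(w) = \mu_{\mathcal{F}}(w)$, the ``otherwise'' branch returns $\mu_{\mathcal{F}}(w)$, which equals $\mu'(w)$ anyway. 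Either way $[SO(\mu')](w) = \mu'(w)$, as required.

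Second I would treat the case in which $\mu'(w)$ is not achievable for $w$. Here Corollary~\ref{lema5} applies directly and yields $\mu_{\mathcal{F}}(w) P(w) \mu'(w)$. Since the condition $\mu'(w) P(w) \mu_{\mathcal{F}}(w)$ therefore fails, the ``otherwise'' branch of $\veebar_{\mathcal{W}}$ returns $[SO(\mu')](w) = \mu_{\mathcal{F}}(w)$, which is exactly the second line of the claimed formula. The only point requiring care is the tie $\mu'(w) = \mu_{\mathcal{F}}(w)$ inside the achievable case, where one must observe that the ``otherwise'' branch of $\veebar_{\mathcal{W}}$ still outputs the common value so that no ambiguity arises; beyond that the argument is essentially a transcription of Theorem~\ref{supremo} together with Corollary~\ref{lema5} and the worst-for-workers property of $\mu_{\mathcal{F}}$, so I anticipate no substantive obstacle.
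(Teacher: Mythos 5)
Your proof is correct and takes essentially the same route as the paper: invoke Theorem~\ref{supremo} to write $SO(\mu')=\mu'\veebar_{\mathcal{W}}\mu_{\mathcal{F}}$, then case-split on achievability, using the worker-worst property of $\mu_{\mathcal{F}}$ in the first case and Corollary~\ref{lema5} in the second. If anything, you are slightly more careful than the paper, which in the achievable case asserts the strict preference $\mu'(w)\,P(w)\,\mu_{\mathcal{F}}(w)$ and thereby glosses over the tie $\mu'(w)=\mu_{\mathcal{F}}(w)$ that you resolve explicitly via the ``otherwise'' branch of $\veebar_{\mathcal{W}}$.
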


\begin{proof}
Let $\mu=SO(\mu')$ and $w \in \mathcal{W}$. If $\mu'(w)$  is achievable
for  $w$,  then, because $\mu_{ \mathcal{F}}$ is the worker-worst stable matching, $\mu'(w)P(w) \mu_{ \mathcal{F}}(w)$.  Hence $\mu' \veebar_\mathcal{W} \mu_{ \mathcal{F}}(w)=\mu'(w)$ and by Theorem  \ref{supremo} $\mu(w)=\mu'(w)$. If, alternatively, $\mu'(w)$ is not achievable for $w$, then by Corollary \ref{lema5} $\mu_{ \mathcal{F}}(w) P(w) \mu'(w)$. Hence $\mu'\veebar_\mathcal{W} \mu_{ \mathcal{F}}(w)= \mu_{ \mathcal{F}}(w)$ and by Theorem \ref{supremo} $\mu(w)=\mu'\vee_\mathcal{W} \mu_{ \mathcal{F}}(w)= \mu_{ \mathcal{F}}(w)$. 
\end{proof}

\medskip

Corollary \ref{cor2} shows that when the input is a firm quasi-stable matching, the outcome of a  worker  under the SO Algorithm is determined by  their initial outcome and is independent of the initial outcome of the other agents. In particular, if the assignments of a worker under two firm quasi-stable matchings coincide, then the assignments of that worker coincide under the corresponding outputs of the SO Algorithm.

\section[The connection between the original stable matching and
the one obtained  in the new market]{The connection between the original stable matching and
the one obtained  in the new market}

In Section 3  we study the  firm quasi-stable matchings which can  arise from stable matchings following the creation of new jobs and/or the retirement of workers. The SO Algorithm restabilizes any firm quasi-stable matching, its restabilization process connects the stable matching in the original (pre-job creation and retirement) market with the stable matching achieved in the new market.  In the current section we use results of Section 4, regarding the characterization  of  the  outcome  of  the  Set  Offering algorithm, to explore the connection between the original stable matching and the one obtained after the restabilization process in the new market.

\medskip

The following definitions of Blum et al. (1997)  relate two arbitrary markets.
\medskip

Let $(\mathcal{F},\mathcal{W},\mathcal{P})$ and $(\mathcal{F'},\mathcal{W'},\mathcal{P'})$ be distinct markets. We say that the market $(\mathcal{F},\mathcal{W},\mathcal{P})$  is consistent with market $(\mathcal{F'},\mathcal{W'},\mathcal{P'})$ if the natural restrictions of $\mathcal{P}$ and $\mathcal{P'}$ to the set $( \mathcal{F}\cap \mathcal{F'})\cup (\mathcal{W}\cap \mathcal{W'})$ coincide, i.e., if for ${ \mathcal{F}}^*={\mathcal{F}}\cap { \mathcal{F'}}$ and $\mathcal{W}^*=\mathcal{W}\cap \mathcal{W'}$ the following conditions hold:

\begin{itemize}
\item[(1)] $A(\mathcal{F},\mathcal{W},\mathcal{P})\cap ( \mathcal{F}^* \times \mathcal{W}^*)= A(\mathcal{F'},\mathcal{W'},\mathcal{P'})\cap  (\mathcal{F}^* \times \mathcal{W}^*)$ , i.e., the set of accepta\-ble worker-firm coalitions is the same in both markets when one considers
only agents who belong to both markets;
\item[(2)] for each $f\in { F^*}$ and all $S,S'\subseteq \mathcal{W}^*$, $S P(f)f S'$ if and only if $S P'(f) S'$;
\item[(3)] for each $w\in \mathcal{W}^*$ and all $f,f'\in { F}^*$, $fP(w) f'$ if and only if $fP'(w)f'$.
\end{itemize}

We say that market $(\mathcal{F'},\mathcal{W'},\mathcal{P'})$ leads to market $(\mathcal{F},\mathcal{W},\mathcal{P})$, which is written as $(\mathcal{F'},\mathcal{W'},\mathcal{P'})\rightarrow (\mathcal{F},\mathcal{W},\mathcal{P})$ if
\begin{itemize}
\item[(1)]$(\mathcal{F},\mathcal{W},\mathcal{P})$ is consistent with $(\mathcal{F'},\mathcal{W'},\mathcal{P'})$ and 
\item[(2)]$\mathcal{W}\subseteq \mathcal{W}'$ y ${\mathcal{F}'}\subseteq {\mathcal{F}}$.
i.e., the set of positions of a first market is included in the
set of positions of a second market, and the converse for the population of workers.
\end{itemize}

We now relate two matchings in two distinct markets. Assume that $(\mathcal{F'},\mathcal{W'},\mathcal{P'})\rightarrow (\mathcal{F},\mathcal{W},\mathcal{P})$,  we say that $\mu'$ induces $\mu$ if for all $w\in \mathcal{W}$, $\mu(w)=\mu'(w)$,  which happens when new firms open and/or or workers retire. 

\medskip

Throughout this section we assume that $(\mathcal{F'},\mathcal{W'},\mathcal{P'})$ is a  market such that $(\mathcal{F'},\mathcal{W'},\mathcal{P'})\rightarrow (\mathcal{F},\mathcal{W},\mathcal{P})$, with the interpretation that $(\mathcal{F},\mathcal{W},\mathcal{P})$ is obtained from $(\mathcal{F'},\mathcal{W'},\mathcal{P'})$ through creation of new jobs and/or retirement of workers.

\medskip

Cantala (2004) showed that if $\mu'$ is a stable matching
for the original market $(\mathcal{F'},\mathcal{W'},\mathcal{P'})$, then $\mu$ induced by $\mu'$  is a firm quasi-stable matching in the  new market $(\mathcal{F},\mathcal{W},\mathcal{P})$.

\begin{lemma}\label{3.4}
Let two matchings be $\mu$ and $\mu'$ where $\mu'\in S(\mathcal{F'},\mathcal{W'},\mathcal{P'})$ and $\mu'$  induces $\mu$ in  $(\mathcal{F},\mathcal{W},\mathcal{P})$. Then $\mu\in FQS(\mathcal{F},\mathcal{W},\mathcal{P})$.
\end{lemma}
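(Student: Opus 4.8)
The plan is to verify directly the two defining requirements of firm quasi-stability for $\mu$ in the new market $(\mathcal{F},\mathcal{W},\mathcal{P})$: individual rationality, and the inequality $\mu(f)\subseteq Ch(\mu(f)\cup S,P(f))$ for every $f\in\mathcal{F}$ and every nonempty $S\subseteq W_{f,\mu}$. Throughout I would exploit the two structural features of $(\mathcal{F'},\mathcal{W'},\mathcal{P'})\rightarrow(\mathcal{F},\mathcal{W},\mathcal{P})$: the new market is obtained by adding firms ($\mathcal{F'}\subseteq\mathcal{F}$) and deleting workers ($\mathcal{W}\subseteq\mathcal{W'}$), and the consistency conditions (1)--(3) guarantee that the preferences of surviving agents, restricted to coalitions inside $\mathcal{F'}\times\mathcal{W}$, agree in the two markets. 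Since $\mu'$ induces $\mu$, we have $\mu(w)=\mu'(w)$ for all $w\in\mathcal{W}$, so $\mu(f)=\mu'(f)\cap\mathcal{W}$ for every old firm $f\in\mathcal{F'}$, while $\mu(f)=\emptyset$ for every newly created firm $f\in\mathcal{F}\setminus\mathcal{F'}$ (no surviving worker points to such an $f$).

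For individual rationality, the worker side is immediate: if $\mu(w)=\mu'(w)=f$, then $(w,f)$ is acceptable in the old market by the individual rationality of $\mu'$, hence acceptable in the new market by the acceptability part of consistency, so $fP(w)\emptyset$. On the firm side, new firms are trivially individually rational since $\mu(f)=\emptyset$. For an old firm $f$, I would transport $Ch(\mu'(f),P'(f))=\mu'(f)$ (the firm-IR of $\mu'$) to the new market: consistency~(2) replaces $P'(f)$ by $P(f)$ on subsets of $\mathcal{W}$, and substitutability lets me delete the retired workers $\mu'(f)\setminus\mathcal{W}$ one at a time while keeping every surviving worker in the choice set, which yields $Ch(\mu(f),P(f))=\mu(f)$.

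For the firm quasi-stability inequality, new firms are vacuous because $\mu(f)=\emptyset$. For an old firm $f$ and nonempty $S\subseteq W_{f,\mu}$, note first that $S\subseteq\mathcal{W}$, so by consistency~(2) the choice sets under $P(f)$ and $P'(f)$ over $\mu(f)\cup S$ coincide and the whole argument can be carried out in the original market. Each $w\in S$ satisfies $fP(w)\mu(w)=\mu'(w)$, which by consistency (conditions (3) and (1)) gives $fP'(w)\mu'(w)$; since $\mu'$ is stable, $(w,f)$ cannot block $\mu'$, so $w\notin Ch(\mu'(f)\cup\{w\},P'(f))$. I would then invoke substitutability in its rejected-stays-rejected form to get $w\notin Ch(\mu'(f)\cup S,P'(f))$ for every $w\in S$, whence $Ch(\mu'(f)\cup S,P'(f))\subseteq\mu'(f)$; comparing this with $Ch(\mu'(f),P'(f))=\mu'(f)$ forces $Ch(\mu'(f)\cup S,P'(f))=\mu'(f)$. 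Finally I would once more delete the retired workers from $\mu'(f)\cup S$ (they all lie in $\mu'(f)$, as $S\subseteq\mathcal{W}$), using substitutability to keep every worker of $\mu(f)\subseteq\mu'(f)$ chosen, which delivers $\mu(f)\subseteq Ch(\mu(f)\cup S,P(f))$.

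The step I expect to be the main obstacle is the passage between the two markets, and in particular making the two uses of substitutability airtight. One use consumes the stability of $\mu'$ to show that the freshly competing workers of $S$ stay rejected by $f$ as the competing set grows; the other shows that the surviving workers of $\mu(f)$ stay chosen as the retired workers are removed. Each is a one-element-at-a-time induction justified by the definition of substitutability, but care is required to check at every stage that the worker being removed actually belongs to the current set and that the tracked worker actually lies in the current choice set, so that the hypothesis of substitutability applies. The remaining subtlety is purely bookkeeping: keeping straight which agents lie in $\mathcal{W}$, $\mathcal{W'}$, $\mathcal{F'}$, and $\mathcal{F}\setminus\mathcal{F'}$, and invoking the correct consistency condition (1), (2), or (3) at each transfer.
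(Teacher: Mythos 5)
The paper itself does not prove this lemma: it is imported from Cantala (2004) with no argument given. Your plan of verifying firm quasi-stability directly is therefore the natural self-contained route, and most of it is sound: the reduction $\mu(f)=\mu'(f)\cap\mathcal{W}$ for $f\in\mathcal{F}'$ and $\mu(f)=\emptyset$ for $f\in\mathcal{F}\setminus\mathcal{F}'$, the vacuous treatment of new firms, carrying out the choice-theoretic work under $P'(f)$ and transporting it with consistency condition (2) (legitimate because every set involved, $\mu(f)\cup S$, lies inside $\mathcal{W}^*=\mathcal{W}$), the rejected-stays-rejected use of substitutability, and the one-at-a-time deletion of retired workers (who all sit in $\mu'(f)$) are all correct.

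One step, however, is false as stated: the claim that for every $w\in S\subseteq W_{f,\mu}$, consistency conditions (3) and (1) give $fP'(w)\mu'(w)$. Condition (3) compares only pairs of firms in $\mathcal{F}^*$, so it handles the case $\mu'(w)\in\mathcal{F}'$ but says nothing about comparisons with $\emptyset$. When $\mu'(w)=\emptyset$ you must appeal to condition (1), but acceptability of the coalition $(\{w\},f)$ is two-sided: it also requires $\{w\}P(f)\emptyset$. If $\emptyset P(f)\{w\}$ (equivalently $\emptyset P'(f)\{w\}$, by condition (2)), then $(\{w\},f)$ is unacceptable in both markets no matter what $w$'s preferences are, and consistency imposes no link between $fP(w)\emptyset$ and $fP'(w)\emptyset$; one can easily construct consistent markets in which $fP(w)\emptyset$ holds while $\emptyset P'(w)f$, so the preference transfer genuinely fails in this sub-case. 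The gap is local and patchable: precisely when $\emptyset P'(f)\{w\}$, the conclusion you actually need, namely $w\notin Ch(\mu'(f)\cup\{w\},P'(f))$, holds for a different reason --- under substitutable preferences an unacceptable worker is never chosen from any set, since $w\in Ch(T,P'(f))$ would imply, by repeated deletion, $w\in Ch(\{w\},P'(f))$, i.e. $\{w\}P'(f)\emptyset$. And when instead $\{w\}P'(f)\emptyset$, conditions (2) and (1) together (not (1) alone) do yield $fP'(w)\emptyset=\mu'(w)$, so stability of $\mu'$ applies exactly as you intended. Splitting the unemployed case this way closes the argument; the rest of your proof stands.
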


 Next we show that when the SO algorithm is used to restabilize the market, the entry of new firms and the retirement of workers cannot be good for any of the original firms, and cannot be bad for any of the workers who have not retired.

\begin{lemma}
Let two matchings be $\mu$ and $\mu'$ where $\mu'\in S(\mathcal{F'},\mathcal{W'},\mathcal{P'})$ and $\mu'$  induces $\mu$ in  $(\mathcal{F},\mathcal{W},\mathcal{P})$. Then $SO(\mu)\succeq_{\mathcal{W}}^{\mathcal{P}}\mu'$ and $\mu' \succeq_{\mathcal{F_B'}}^{\mathcal{P'}} SO(\mu)$ 
\footnote{Blair (1988), defines the partial ordering $\succeq_{\mathcal{F_B}}$
 on $(\mathcal{F},\mathcal{W},\mathcal{P})$ as
follows: $\mu _{1}\succeq_{\mathcal{F_B}}\mu _{2}\Leftrightarrow Ch(\mu_1(f) \cup \mu_2(f),P(f))=\mu_1(f)$ for all $f\in \mathcal{F}$. We sometimes add superscripts and write, for example $\succeq_{\mathcal{F_B}}^{\mathcal{P}}$ to emphasize dependence on particular preferences.}.
\end{lemma}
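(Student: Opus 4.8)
The plan is to establish the two inequalities separately. Before either, I would apply Lemma~\ref{3.4} to record that the induced matching $\mu$ is firm quasi-stable in $(\mathcal{F},\mathcal{W},\mathcal{P})$, so that $SO(\mu)$ is defined and, by Theorem~\ref{of2}, stable; write $\nu=SO(\mu)$. The worker-side inequality $SO(\mu)\succeq_{\mathcal{W}}^{\mathcal{P}}\mu'$ is then immediate: Lemma~\ref{1} gives $\nu\succeq_{\mathcal{W}}^{\mathcal{P}}\mu$, i.e. $\nu(w)\,R(w)\,\mu(w)$ for every $w\in\mathcal{W}$, and since $\mu'$ induces $\mu$ we have $\mu(w)=\mu'(w)$ on $\mathcal{W}$ with each value lying in $\mathcal{F}'\cup\{\emptyset\}\subseteq\mathcal{F}\cup\{\emptyset\}$, so the comparison under $P(w)$ is legitimate and $\nu(w)\,R(w)\,\mu'(w)$ holds for all $w\in\mathcal{W}$.

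The firm-side inequality $\mu'\succeq_{\mathcal{F_B'}}^{\mathcal{P'}}\nu$ is where the work lies: I must show $Ch(\mu'(f)\cup\nu(f),P'(f))=\mu'(f)$ for every original firm $f\in\mathcal{F}'$. Fix such an $f$; all workers in $\mu'(f)\cup\nu(f)$ lie in $\mathcal{W}'$, so $P'(f)$ ranks the relevant subsets. The key step is that no worker of $\nu(f)\setminus\mu'(f)$ is chosen. Take $w\in\nu(f)\setminus\mu'(f)$, so $\nu(w)=f$; by the worker-side inequality $f=\nu(w)\,R(w)\,\mu'(w)$, and this is strict, $f\,P(w)\,\mu'(w)$, because $w\notin\mu'(f)$. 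I would then transfer this strict preference to the original market using consistency: condition~(3) when $\mu'(w)$ is a firm of $\mathcal{F}^*=\mathcal{F}'$, and condition~(1) when $\mu'(w)=\emptyset$ (here one also uses that $w$ is acceptable to $f$, which holds since $w\in\nu(f)$ and $\nu$ is individually rational for firms). This yields $f\,P'(w)\,\mu'(w)$, so the stability of $\mu'$ in $(\mathcal{F}',\mathcal{W}',\mathcal{P}')$ forbids $(w,f)$ from blocking and forces $w\notin Ch(\mu'(f)\cup\{w\},P'(f))$.

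Finally I would upgrade this single-worker rejection to the whole set: since $\mu'(f)\cup\{w\}\subseteq\mu'(f)\cup\nu(f)$ and $w$ is rejected from the smaller set, substitutability (rejection monotonicity) gives $w\notin Ch(\mu'(f)\cup\nu(f),P'(f))$ for every $w\in\nu(f)\setminus\mu'(f)$, whence $Ch(\mu'(f)\cup\nu(f),P'(f))\subseteq\mu'(f)$. Equality then follows by a two-sided preference comparison: the choice set is $P'(f)$-best among subsets of $\mu'(f)\cup\nu(f)$ and so is weakly preferred to $\mu'(f)$, whereas $\mu'(f)=Ch(\mu'(f),P'(f))$ (individual rationality of $\mu'$) is weakly preferred to any of its subsets, in particular to the choice set; strictness of $P'(f)$ forces the two to coincide. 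I expect this firm-side argument to be the main obstacle, and within it the two delicate points are (i) carrying the worker's strict preference across the two distinct markets through the consistency conditions, and (ii) passing from rejection of a single worker to rejection from the full union via substitutability.
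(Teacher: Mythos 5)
Your proposal is correct and takes essentially the same route as the paper: the worker-side inequality via Lemma~\ref{3.4} and Lemma~\ref{1}, and the firm-side inequality via a blocking-pair argument combining the worker-side inequality, consistency, substitutability of $P'(f)$, and individual rationality of $\mu'$. The only differences are presentational — the paper argues by contradiction (extracting a chosen $w\in [SO(\mu)](f)\setminus\mu'(f)$ and shrinking the choice set by substitutability) where you argue contrapositively (rejecting each such $w$ and expanding by substitutability), and you handle the consistency transfer of $fP(w)\mu'(w)$ to $fP'(w)\mu'(w)$ in the case $\mu'(w)=\emptyset$ via condition~(1) more carefully than the paper, which does not distinguish that case.
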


\begin{proof}
By Lemma \ref{3.4},  $\mu\in FQS(\mathcal{F'},\mathcal{W'},\mathcal{P'})$ and by Lemma \ref{1}, $SO(\mu) \succeq_\mathcal{W}^{\mathcal{P}} \mu$. As $\mu'(w)=\mu(w)$ for each $w\in \mathcal{W}\subseteq \mathcal{W}'$ have that $SO(\mu)\succeq_{\mathcal{W}}^{\mathcal{P}}\mu'$ as claimed. To prove the second inequality, assume by way of contradiction that $Ch([SO(\mu)](f) \cup \mu'(f),P'(f))\not =\mu'(f)$ for some $f \in \mathcal{F'}$.
% Como  $({ F '},\mathcal{W}',\mathbf{P})$ $\rightarrow$ $({ F },W,\mathbf{P})$  tenemos que $\mu'\in FQS({ F},W,\mathbf{P})$. 
Then there is  $w \in [SO(\mu)](f)\setminus \mu'(f)$ such that  $w \in Ch([(SO)(\mu)](f) \cup \mu'(f),P'(f))$ (since  $\mu'$ is individually rational) and   the substitutability of $P'(f)$ implies that  $w \in Ch(\mu'(f) \cup \{w\},P'(f))$.
By first conclusion    $f = [SO(\mu)](w)P(w) \mu'(w)$.  (there is no equality since  $\mu'(w)\not=f$. Further, since $(\mathcal{F},\mathcal{W},\mathcal{P})$ and $(\mathcal{F'},\mathcal{W'},\mathcal{P'})$  are consistent, the inequality $f P(w) \mu'(w)$ implies
that $f P'(w) \mu'(w)$. Hence, the pair $(f,w)$ is a blocking pair for $\mu'$ under $(\mathcal{F'},\mathcal{W'},\mathcal{P'})$, in contradiction to its asserted stability.
\end{proof}

\medskip

The following  lemma shows that each new firm ends up with its optimal achieva\-ble outcome.

\begin{lemma}\label{ultimafirma}
Let two matchings be $\mu$ and $\mu'$ where $\mu'\in S(\mathcal{F'},\mathcal{W'},\mathcal{P'})$ and $\mu'$  induces $\mu$ in  $(\mathcal{F},\mathcal{W},\mathcal{P})$ and $f\in \mathcal{F}\setminus \mathcal{F'} $. Then $[SO (\mu)](f)= \mu_ {\mathcal{F}}(f)$.
\end{lemma}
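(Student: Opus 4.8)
The plan is to combine the closed-form description of $SO(\mu)$ from Corollary \ref{cor2} with the invariance of firm capacities across stable matchings from Theorem \ref{single}. First I would record that, by Lemma \ref{3.4}, the induced matching $\mu$ is firm quasi-stable in $(\mathcal{F},\mathcal{W},\mathcal{P})$, so that both Theorem \ref{of2} and Corollary \ref{cor2} apply to it. The crucial preliminary observation is that $f$ is a \emph{new} firm: since $\mu'$ induces $\mu$, we have $\mu(w)=\mu'(w)\in \mathcal{F}'\cup\{\emptyset\}$ for every $w\in\mathcal{W}$ (using $\mathcal{W}\subseteq\mathcal{W}'$), and as $f\in\mathcal{F}\setminus\mathcal{F}'$ this forces $\mu(w)\neq f$ for all $w$; equivalently $\mu(f)=\emptyset$.

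Next I would prove the inclusion $[SO(\mu)](f)\subseteq \mu_{\mathcal{F}}(f)$. Take $w\in[SO(\mu)](f)$, i.e.\ $[SO(\mu)](w)=f$. By Corollary \ref{cor2}, $[SO(\mu)](w)$ equals either $\mu(w)$ (when $\mu(w)$ is achievable for $w$) or $\mu_{\mathcal{F}}(w)$ (when it is not). Since $\mu(w)\neq f$ by the observation above, the first alternative cannot produce the value $f$; hence $[SO(\mu)](w)=\mu_{\mathcal{F}}(w)=f$, so $w\in\mu_{\mathcal{F}}(f)$. This gives the desired inclusion.

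Finally I would upgrade the inclusion to an equality by a counting argument. Both $SO(\mu)$ (stable by Theorem \ref{of2}) and $\mu_{\mathcal{F}}$ belong to $S(\mathcal{F},\mathcal{W},\mathcal{P})$, so Theorem \ref{single} yields $|[SO(\mu)](f)|=|\mu_{\mathcal{F}}(f)|$. A subset of a finite set having the same cardinality as that set must coincide with it, whence $[SO(\mu)](f)=\mu_{\mathcal{F}}(f)$.

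The argument is short, and the only genuinely load-bearing step is the opening observation that a new firm receives no worker under the input matching $\mu$; this is what collapses the two-case formula of Corollary \ref{cor2} into the single inclusion, and everything else is bookkeeping together with the equal-cardinality fact of Theorem \ref{single}. As an alternative to Corollary \ref{cor2}, one could expand $SO(\mu)=\mu\veebar_{\mathcal{W}}\mu_{\mathcal{F}}$ from Theorem \ref{supremo} directly and note that no worker can point to $f$ through its $\mu$-coordinate, obtaining the same inclusion.
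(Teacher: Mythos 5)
Your proof is correct and rests on the same ingredients as the paper's: the observation that a new firm satisfies $\mu(f)=\emptyset$, the closed-form representation of $SO(\mu)$ (the paper invokes Theorem \ref{supremo} directly where you invoke its Corollary \ref{cor2}), stability of the output via Theorem \ref{of2}, and the cardinality invariance of Theorem \ref{single}. The only difference is organizational: the paper splits into the cases $|\mu_{\mathcal{F}}(f)|<q_f$ (handled by part 2 of Theorem \ref{single}) and $|\mu_{\mathcal{F}}(f)|=q_f$ (handled exactly as you do, by inclusion plus equal cardinality), whereas your single inclusion-plus-counting argument covers both cases at once, showing the paper's case distinction is dispensable.
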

\begin{proof}
Let $f\in \mathcal{F}\setminus \mathcal{F'} $. Then $\mu(f)=\emptyset$ and by Lemma \ref{3.4},  $\mu\in FQS({ F},\mathcal{W},\mathbf{P})$. If $|\mu_{ \mathcal{F} }(f)|< q_f$  then by Theorems \ref{single} and \ref{of2},   
 $[SO(\mu)](f)=\mu_{\mathcal{F} }(f)$. If alternatively $|\mu_{\mathcal{F}  }(f)|= q_f$,  then by Theorems \ref{single} and \ref{of2},  $|[SO(\mu)](f)|=q_f$ and by Theorem \ref{supremo} \ $[SO (\mu)](f)\subseteq  \mu(f)\cup \mu_ { F}(f)=\mu_ { F}(f)$. Hence, $[SO (\mu)](f)= \mu_ {\mathcal{F}}(f)$.
\end{proof}

\medskip

The next theorem gives a closed-form representation of the outcome for each of the ¨original¨  workers under the output of the SO algorithm.

\begin{theorem}\label{repre-trabajadores}
Let two matchings be $\mu$ and $\mu'$ where $\mu'\in S(\mathcal{F'},\mathcal{W'},\mathcal{P'})$ 
and $\mu'$  induces $\mu$ in  $(\mathcal{F},\mathcal{W},\mathcal{P})$ and $w\in \mathcal{W}$. Then
$$[SO (\mu)](w)= max_w^{\mathcal{P}}\{\mu(w),\mu_{\mathcal{F}}(w)\} $$
\end{theorem}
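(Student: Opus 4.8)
The plan is to identify $[SO(\mu)](w)$ with the worker-optimal choice between the induced matching $\mu$ and the firm-optimal stable matching $\mu_{\mathcal{F}}$ of the new market, by invoking the closed-form representation already established in Corollary~\ref{cor2}. First I would observe that, by Lemma~\ref{3.4}, the induced matching $\mu$ is firm quasi-stable in $(\mathcal{F},\mathcal{W},\mathcal{P})$, so Corollary~\ref{cor2} applies directly to $\mu$: for each $w\in\mathcal{W}$ we have $[SO(\mu)](w)=\mu(w)$ when $\mu(w)$ is achievable for $w$, and $[SO(\mu)](w)=\mu_{\mathcal{F}}(w)$ when $\mu(w)$ is not achievable. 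The whole task then reduces to showing that this case split coincides with taking the $P(w)$-maximum of the two-element set $\{\mu(w),\mu_{\mathcal{F}}(w)\}$.

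The key step is to reconcile the two formulations. When $\mu(w)$ is achievable for $w$, it is the outcome of some stable matching, so by the worker-pessimality of $\mu_{\mathcal{F}}$ (Roth's result in the Preliminaries, $\mu \succeq_{\mathcal{W}}\mu_{\mathcal{F}}$ for every stable $\mu$) we have $\mu(w)\,R(w)\,\mu_{\mathcal{F}}(w)$, hence $\max_w^{\mathcal{P}}\{\mu(w),\mu_{\mathcal{F}}(w)\}=\mu(w)=[SO(\mu)](w)$. When $\mu(w)$ is not achievable for $w$, Corollary~\ref{lema5} applied to the firm quasi-stable matching $\mu$ gives $\mu_{\mathcal{F}}(w)\,P(w)\,\mu(w)$, so again $\max_w^{\mathcal{P}}\{\mu(w),\mu_{\mathcal{F}}(w)\}=\mu_{\mathcal{F}}(w)=[SO(\mu)](w)$. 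In both cases the maximum agrees with the value from Corollary~\ref{cor2}, which establishes the identity.

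The main obstacle is largely notational rather than mathematical: one must make precise what $\max_w^{\mathcal{P}}$ means when the two candidate partners happen to be equal or when one of them is $\emptyset$, and confirm that the ``achievable'' dichotomy used in Corollary~\ref{cor2} partitions the workers exactly as the max-comparison does. Since $\mu_{\mathcal{F}}(w)$ is itself always achievable and is the $P(w)$-worst achievable outcome, the comparison $\mu(w)$ versus $\mu_{\mathcal{F}}(w)$ is well defined and strict except in the degenerate coincidence $\mu(w)=\mu_{\mathcal{F}}(w)$, where both formulas trivially return the common value. Thus the proof is essentially a one-line appeal to Corollary~\ref{cor2} together with the directional inequalities supplied by Corollary~\ref{lema5} and the worker-pessimality of $\mu_{\mathcal{F}}$; no new estimates or algorithm analysis are required.
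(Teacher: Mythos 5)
Your proof is correct, and there is no circularity: Corollary~\ref{cor2} is established in Section~4, independently of this theorem, so you may invoke it here. However, your route differs from the paper's in an instructive way. The paper's proof is a one-liner: by Lemma~\ref{3.4} the induced matching $\mu$ is firm quasi-stable, so Theorem~\ref{supremo} gives $SO(\mu)=\mu\veebar_{\mathcal{W}}\mu_{\mathcal{F}}$, and then the formula $[SO(\mu)](w)=\max_w^{\mathcal{P}}\{\mu(w),\mu_{\mathcal{F}}(w)\}$ is nothing more than the \emph{definition} of the join $\veebar_{\mathcal{W}}$ restricted to workers --- no case analysis is needed. You instead invoke Corollary~\ref{cor2} (the achievability dichotomy) and then reconcile its two cases with the pointwise maximum, using worker-pessimality of $\mu_{\mathcal{F}}$ in the achievable case and Corollary~\ref{lema5} in the non-achievable case. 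That reconciliation is exactly the argument by which the paper derives Corollary~\ref{cor2} \emph{from} Theorem~\ref{supremo}, so your proof is the paper's proof composed with the proof of Corollary~\ref{cor2} and its inverse: logically sound, but it re-derives an equivalence already paid for. What your version buys is an explicit statement of how the achievability partition of workers matches the max-comparison; what it costs is length, and it obscures the fact that the theorem is definitionally immediate once $SO(\mu)$ is identified with the lattice join $\mu\veebar_{\mathcal{W}}\mu_{\mathcal{F}}$.
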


\begin{proof}
By Lemma \ref{3.4},  $\mu\in FQS(\mathcal{F'},\mathcal{W'},\mathcal{P'})$, then Theorem \ref{supremo} implies that $SO(\mu)=\mu\veebar_\mathcal{W} \mu_{ F}$. Hence, by the definition  of   function $\mu\veebar_\mathcal{W} \mu_{ F}$,
$[SO (\mu)](w)= max_w^{\mathcal{P}}\{\mu(w),\mu_{\mathcal{F}}(w)\} $\end{proof}

\medskip

As a corollary of the above  representation we get the
monotonicity of the SO algorithm with respect  to the partial order $\succeq_{\mathcal{W}}$.

\begin{corollary}
Let $\mu'_1,\mu'_2 \in S(\mathcal{F'},\mathcal{W'},\mathcal{P'})$ and $\mu'_1,\mu'_2$  induces $\mu_1,\mu_2$ in  $(\mathcal{F},\mathcal{W},\mathcal{P})$, respectively. If $\mu'_1 \succeq_\mathcal{W}^{\mathbf{P}'} \mu'_2$ then $SO(\mu_1)\succeq_\mathcal{W}^{\mathcal{P}} SO(\mu_2)$
\end{corollary}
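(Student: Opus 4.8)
The plan is to reduce everything to the closed-form representation already established in Theorem~\ref{repre-trabajadores}, which states that for any worker $w\in\mathcal{W}$ we have $[SO(\mu_i)](w)=\max_w^{\mathcal{P}}\{\mu_i(w),\mu_{\mathcal{F}}(w)\}$ for $i=1,2$. The key point is that the firm-optimal stable matching $\mu_{\mathcal{F}}$ of the \emph{new} market $(\mathcal{F},\mathcal{W},\mathcal{P})$ is a fixed object that does not depend on which induced matching we start from, so both outputs are expressed as a pointwise maximum against the \emph{same} benchmark $\mu_{\mathcal{F}}(w)$.

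First I would fix an arbitrary worker $w\in\mathcal{W}$ and translate the hypothesis $\mu'_1\succeq_\mathcal{W}^{\mathbf{P}'}\mu'_2$ into a statement about the induced matchings in the new market. Since $\mu'_j$ induces $\mu_j$, by definition $\mu_j(w)=\mu'_j(w)$ for every $w\in\mathcal{W}\subseteq\mathcal{W}'$. The hypothesis gives $\mu'_1(w)\,R'(w)\,\mu'_2(w)$, i.e.\ either the firms are equal or $\mu'_1(w)\,P'(w)\,\mu'_2(w)$. Here I must use the consistency of the two markets: condition~(3) in the definition of ``leads to'' guarantees that for $f,f'\in\mathcal{F}^*$ the ranking under $P'(w)$ and under $P(w)$ agree, and since every firm that $w$ can be matched to under $\mu_1,\mu_2$ lies in $\mathcal{F}'\subseteq\mathcal{F}$, the preference $\mu_1(w)\,R(w)\,\mu_2(w)$ follows in the new market as well. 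Hence $\mu_1\succeq_\mathcal{W}^{\mathcal{P}}\mu_2$.

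Next I would combine this with Theorem~\ref{repre-trabajadores}. Writing $a=\mu_1(w)$, $b=\mu_2(w)$, and $c=\mu_{\mathcal{F}}(w)$, we have $a\,R(w)\,b$ and we want $\max_w^{\mathcal{P}}\{a,c\}\,R(w)\,\max_w^{\mathcal{P}}\{b,c\}$. This is the elementary monotonicity of the binary maximum with respect to a total order $P(w)$ (recall $P(w)$ is strict, transitive, and complete on $\mathcal{F}\cup\emptyset$, so $\max_w^{\mathcal{P}}$ is well defined and order-preserving in each argument): if $a$ weakly dominates $b$ then $\max\{a,c\}$ weakly dominates $\max\{b,c\}$, because the left maximum equals $a$ or $c$, the right equals $b$ or $c$, and in each case the left value is $R(w)$-at-least the right one. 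Applying the representation on both sides yields $[SO(\mu_1)](w)\,R(w)\,[SO(\mu_2)](w)$ for the fixed $w$.

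Since $w\in\mathcal{W}$ was arbitrary, this holds for every worker, which is exactly the definition of $SO(\mu_1)\succeq_\mathcal{W}^{\mathcal{P}}SO(\mu_2)$, completing the argument. I do not expect a serious obstacle: the substantive content is already packaged in Theorem~\ref{repre-trabajadores}, and the only genuine care needed is the bookkeeping in the first step, where one must invoke consistency (condition~(3)) to carry the preference comparison from $\mathcal{P}'$ over to $\mathcal{P}$ and to confirm that the relevant firms indeed lie in $\mathcal{F}^*$; the remainder is the routine monotonicity of a pointwise maximum against a common benchmark.
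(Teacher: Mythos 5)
Your proposal is correct and takes essentially the same route as the paper's own proof: both reduce the claim to the closed-form representation $[SO(\mu_i)](w)=\max_w^{\mathcal{P}}\{\mu_i(w),\mu_{\mathcal{F}}(w)\}$ of Theorem \ref{repre-trabajadores}, transfer the workers' preference comparison from $\mathcal{P}'$ to $\mathcal{P}$ via consistency of the two markets together with $\mu_i(w)=\mu'_i(w)$, and conclude pointwise against the common benchmark $\mu_{\mathcal{F}}$. The only difference is presentational: you spell out the monotonicity of the binary maximum and the fact that the relevant firms lie in $\mathcal{F}^*$, steps the paper compresses into the word ``immediate.''
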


\begin{proof}
The conclusions of the corollary with respect to the workers are
imme\-diate from Theorem \ref{repre-trabajadores} and the fact that for each $w$  the inequality $\mu'_1(w)P'(w) \mu'_2(w)$ implies
that $\mu_1(w)P(w) \mu_2(w)$ (as  $(\mathcal{F},\mathcal{W},\mathcal{P})$ and $(\mathcal{F'},\mathcal{W'},\mathcal{P'})$  are consistent and $\mu_1(w)=\mu'_1(w)$, $\mu_2(w)=\mu'_2(w)$).

%Sean $\nu_1=(SO)(\mu_1)$,  $\nu_2=(SO)(\mu_2)$ y $w\in W$. Por el Teorema  \ref{supremo} podemos considerar los siguientes casos:
%\medskip
%\noindent Caso 1: $\nu_1(w)=\mu_1(w)$. Nuevamente por el Teorema \ref{supremo}  $\nu_2(w)=\mu_2(w)$ o $\nu_2(w)=\mu_{F}(w)$. Si $\nu_2(w)=\mu_2(w)$, entonces $\nu_1(w) R(w)  \nu_2(w)$, ya que $\mu'_1 \succeq_W^{\mathbf{P}'} \mu'_2$, $\mu_1(w)=\mu'_1(w)$, $\mu_2(w)=\mu'_2(w)$ y $({ F},W,\mathbf{P})$ es consistente con $({ F'},W',\mathbf{P}')$. Si $\nu_2(w)=\mu_{F}(w)$,  por el Teorema \ref{supremo}   $\nu_1(w)=\mu_1(w)P(w) \mu_{ F}(w)=\nu_2(w)$.

%\smallskip

%\noindent Caso 2: $\nu_1(w)=\mu_{F}(w)$. Por el Teorema \ref{supremo}  $\nu_2(w)=\mu_2(w)$ o $\nu_2(w)=\mu_{F}(w)$. Si $\nu_2(w)=\mu_2(w)$ por el Teorema \ref{supremo} $\mu_{ F}(w) P(w)\mu_1(w)$, adem\'as como $\mu'_1 \succeq_W^{\mathbf{P}'} \mu'_2$, $({ F},W,\mathbf{P})$ es consistente con $({ F'},W',\mathbf{P}')$ y $\mu_1(w)=\mu'_1(w)$, $\mu_2(w)=\mu'_2(w)$ tenemos que $\mu_1(w) P(w)  \mu_2(w)$. Entonces $\mu_{ F}(w) P(w) \mu_2(w)$. Si $\nu_2(w)=\mu_{ F}(w)$ y como $\nu_1(w)=\mu_{F}(w)$ tenemos que $\nu_2(w)=\nu_1(w)$.
\end{proof}
%\begin{corc}
%Sean $\mu'\in FQS({\cal F},W,q,\succ)$, $f \in {\cal F}$ y  $A=\{w\in \mu'(f):$ existe $\mu*\in S({\cal F})$ tal que $\mu*(w)=f \}$. Entonces $\mu(f)\cap \mu'(f)=A$. 
%\end{corc}
%\begin{Demostracion}
%Sea $w\in A$ entonces  $w\in \mu'(f)$ y existe $\mu*\in S({\cal F},W,q,\succ)$ tal que $\mu*(w)=f$. Luego $f=\mu'(w)$ es alcanzable a $w$, entonces por el corolario \ref{cor2} $\mu(w)=\mu'(w)=f$. Por lo tanto  $w\in \mu(f)$, concluyendo que $A\subseteq \mu(f)$.
%Sea $w\in\mu'(f)\setminus  A$ entonces $\mu'(w)=f$ no es alcanzable para $w$. Por el corolario \ref{cor2} $\mu(w)=\mu_{{\cal F}}(w)\not= f$ ($f$ no es alcanzable a $w$) por lo tanto $w\not\in \mu(f)$. Concluyendo que $(\mu'(f)\setminus  A)\cap \mu(f)=\emptyset$
%\end{Demostracion}

\medskip

We conclude this section by showing what happens to the workers-optimal stable matching when there are changes in the population.

\begin{theorem}
Let $\mu_\mathcal{W'}$ and  $\mu_\mathcal{W}$ be  the
workers-optimal stable matchings, for $(\mathcal{F'},\mathcal{W'},\mathcal{P'})$ and   $(\mathcal{F},\mathcal{W},\mathcal{P})$, respectively. Then
\[
\mu_\mathcal{W}\succeq_\mathcal{W}^{\mathcal{P}} \mu_\mathcal{W'} \ \   \    , \  \ \mu_\mathcal{W'} \ \succeq_{\mathcal{F'_B}}^{\mathcal{P'}} \ \mu_\mathcal{W} .\ \   \  
\]
\end{theorem}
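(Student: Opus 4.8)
The plan is to reduce the statement to the restabilization machinery of Section~4 by feeding the worker-optimal stable matching of the original market into the SO Algorithm. First I would set $\mu' = \mu_\mathcal{W'}$, the worker-optimal stable matching of $(\mathcal{F'},\mathcal{W'},\mathcal{P'})$, and let $\mu$ be the matching it induces in $(\mathcal{F},\mathcal{W},\mathcal{P})$, so that $\mu(w)=\mu_\mathcal{W'}(w)$ for every $w\in\mathcal{W}$. By Lemma~\ref{3.4}, $\mu\in FQS(\mathcal{F},\mathcal{W},\mathcal{P})$, hence $SO(\mu)$ is defined and, by Theorem~\ref{of2}, is stable in the new market. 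The lemma that compares $SO(\mu)$ with the inducing stable matching then supplies both $SO(\mu)\succeq_{\mathcal{W}}^{\mathcal{P}}\mu_\mathcal{W'}$ and $\mu_\mathcal{W'}\succeq_{\mathcal{F'_B}}^{\mathcal{P'}}SO(\mu)$; these two facts are the engine of the entire argument.

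For the first inequality I would use that $SO(\mu)\in S(\mathcal{F},\mathcal{W},\mathcal{P})$ together with the worker-optimality of $\mu_\mathcal{W}$ in the new market, which gives $\mu_\mathcal{W}\succeq_{\mathcal{W}}^{\mathcal{P}}SO(\mu)$. Chaining this with $SO(\mu)\succeq_{\mathcal{W}}^{\mathcal{P}}\mu_\mathcal{W'}$ and using transitivity of $\succeq_{\mathcal{W}}^{\mathcal{P}}$ yields $\mu_\mathcal{W}\succeq_{\mathcal{W}}^{\mathcal{P}}\mu_\mathcal{W'}$. The comparison is well posed because every match occurring here lies in $\mathcal{F}'\cup\{\emptyset\}\subseteq\mathcal{F}\cup\{\emptyset\}$ and the two workers' preference profiles agree on $\mathcal{F}^*=\mathcal{F}'$ by consistency.

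For the second inequality I would pass from the worker order to Blair's firm order through the polarity-of-interests duality on stable matchings (Blair, 1988): since $\mu_\mathcal{W}$ and $SO(\mu)$ are both stable in $(\mathcal{F},\mathcal{W},\mathcal{P})$ and $\mu_\mathcal{W}\succeq_{\mathcal{W}}^{\mathcal{P}}SO(\mu)$, it follows that $SO(\mu)\succeq_{\mathcal{F_B}}^{\mathcal{P}}\mu_\mathcal{W}$. Both matchings assign to every firm a subset of $\mathcal{W}=\mathcal{W}^*$, so for each $f\in\mathcal{F}'=\mathcal{F}^*$ the consistency condition forces $Ch(\cdot,P(f))$ and $Ch(\cdot,P'(f))$ to coincide on subsets of $\mathcal{W}$; hence the same relation holds in the old market, i.e. $SO(\mu)\succeq_{\mathcal{F'_B}}^{\mathcal{P'}}\mu_\mathcal{W}$. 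Combining this with $\mu_\mathcal{W'}\succeq_{\mathcal{F'_B}}^{\mathcal{P'}}SO(\mu)$ and transitivity of Blair's order (valid under substitutability of $\mathcal{P}'$) gives $\mu_\mathcal{W'}\succeq_{\mathcal{F'_B}}^{\mathcal{P'}}\mu_\mathcal{W}$, completing the proof.

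I expect the main obstacle to be the second inequality, since it must cross between two different order relations and two different markets. The delicate points are (i) invoking the duality that converts a unanimous worker improvement into a Blair-order firm deterioration, and (ii) transporting Blair's order from $(\mathcal{F},\mathcal{P})$ to the sub-market $(\mathcal{F}',\mathcal{P}')$, which is legitimate only because $SO(\mu)$ and $\mu_\mathcal{W}$ match the original firms to subsets of the surviving workers $\mathcal{W}$, precisely where the two preference profiles agree. Transitivity of Blair's order, which itself relies on substitutability, is what finally glues the two markets' inequalities together.
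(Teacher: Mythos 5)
Your proposal is correct, but it takes a genuinely different route from the paper's, most notably for the second (Blair-order) inequality. For the first inequality the two arguments are parallel: the paper produces a stable matching of the new market that all workers weakly prefer to the induced matching by taking the join $\mu'_\mathcal{W}\veebar_\mathcal{W}\mu_\mathcal{W}$ (via Theorem \ref{s}), while you produce one by running the SO Algorithm and quoting Lemma \ref{1}; either way, worker-optimality of $\mu_\mathcal{W}$ plus consistency finishes. For the second inequality, however, the paper never passes through $SO(\mu)$ at all: it gives a direct, self-contained blocking-pair contradiction --- if $Ch(\mu_\mathcal{W}(f)\cup\mu_\mathcal{W'}(f),P'(f))\neq\mu_\mathcal{W'}(f)$ for some $f\in\mathcal{F}'$, substitutability yields a worker $w$ with $w\in Ch(\mu_\mathcal{W'}(f)\cup\{w\},P'(f))$, and the already-established first conclusion together with consistency makes $(w,f)$ a blocking pair of $\mu_\mathcal{W'}$ in the old market. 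In effect the paper proves, inline and across the two markets, precisely the ``polarity'' fact you import from Blair (1988). Your route instead chains three Blair comparisons: the restabilization lemma's conclusion $\mu_\mathcal{W'}\succeq_{\mathcal{F'_B}}^{\mathcal{P'}}SO(\mu)$, the within-market duality turning $\mu_\mathcal{W}\succeq_{\mathcal{W}}^{\mathcal{P}}SO(\mu)$ into $SO(\mu)\succeq_{\mathcal{F_B}}^{\mathcal{P}}\mu_\mathcal{W}$, and transitivity of Blair's order. This is sound, but two points need to be made explicit: (i) neither the duality nor the transitivity of $\succeq_{\mathcal{F_B}}$ is established anywhere in this paper, so your proof is not self-contained, although both are true under substitutability (the duality is a short blocking argument of exactly the kind the paper uses, and transitivity follows from substitutability together with the consistency/Aizerman property of preference-maximizing choice); and (ii) your transitivity step is applied to three matchings that are not all stable in one common market ($\mu_\mathcal{W'}$ is stable only in the old market, $SO(\mu)$ and $\mu_\mathcal{W}$ only in the new one), so you need the firm-by-firm, choice-theoretic version of transitivity for individually rational assignments, not merely the usual statement that Blair's relation is a partial order on the stable set of a fixed market; the standard proof does deliver this, but it should be said. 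What the paper's route buys is brevity and self-containedness; what yours buys is modularity, since the theorem becomes a corollary of the Section 4 restabilization machinery plus general lattice-theoretic facts about Blair's order.
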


\begin{proof}
Assume  $\mu_\mathcal{W'}$ induces   $\mu'_\mathcal{W}$ in  $(\mathcal{F},\mathcal{W},\mathcal{P})$. By Lemma \ref{3.4},    $\mu'_\mathcal{W}\in FQS(\mathcal{F'},\mathcal{W'},\mathcal{P'})$, then theorem \ref{s} implies that $\mu'_\mathcal{W}\veebar_\mathcal{W} \mu_\mathcal{W} \in S(\mathcal{F},\mathcal{W},\mathcal{P})$, so  $\mu_\mathcal{W}\succeq_\mathcal{W}^{\mathcal{P}} \mu'_\mathcal{W}\veebar_\mathcal{W} \mu_\mathcal{W}  \succeq_\mathcal{W}^{\mathcal{P}} \mu'_\mathcal{W}$. Hence  $\mu_\mathcal{W}\succeq_\mathcal{W}^{\mathcal{P}} \mu_\mathcal{W'}$ (since  $\mu_\mathcal{W'}$ induces   $\mu'_\mathcal{W}$, $\mu'_\mathcal{W}(w)=\mu_\mathcal{W'}(w) $ for all $w\in \mathcal{W}$). Now assume by contradiction that $Ch(\mu_\mathcal{W}(f)\cup \mu_\mathcal{W'}(f),P'(f))\not=\mu_\mathcal{W'}(f)$ for some $f \in \mathcal{F'}$. Then there is $w\in \mu_\mathcal{W}(f)\setminus \mu_\mathcal{W'}(f)$ such that $w\in Ch(\mu_W(f)\cup \mu_\mathcal{W'}(f),P'(f))$ (since   $\mu_\mathcal{W'}$ is individually rational en $({F'},\mathcal{W}',\mathbf{P'})$) and   the substitutability of $P'(f)$ implies that $w\in Ch(\mu_\mathcal{W'}(f)\cup \{w\},P'(f))$. By first conclusion $f=\mu_\mathcal{W}(w)P(w)\mu_\mathcal{W'}(w)$ (there is no equality since $w\not\in \mu_\mathcal{W'}(f)$).  Further, since $(\mathcal{F},\mathcal{W},\mathcal{P})$ and $(\mathcal{F'},\mathcal{W'},\mathcal{P'})$  are consistent  the inequality $fP(w)\mu_\mathcal{W'}(w)$ implies
that $fP'(w)\mu_\mathcal{W'}(w)$. Hence, the pair $(f,w)$ is a blocking pair for $\mu_\mathcal{W'}$ en  $(\mathcal{F'},\mathcal{W'},\mathcal{P'})$.
\end{proof}

%Posible conclución una implicación de la caracterización es que alimentar el algoritmo DA con todas las posibles emparejamientos cuasi estables de la empresa no permite alcanzar todos los emparejamientos estables existentes ya que, en particular, los emparejamientos estables que no pueden ser ordenados unánimemente por la empresa, o los trabajadores no son resultado del procedimiento
\bigskip

\noindent {\Large\textbf{References} }

\bigskip

\noindent Blair C. (1988).  The Lattice Structure of The set of Stable Matchings with Multiple Partners. Math. Operations Res. 13, 619-628.
\medskip

\noindent Blum Y., Roth A.E.,  Rothblum U.G., (1997).  Vacancy chains and  equilibration in senior-level labor markets. J. Econo. Theory 76, 362-411.
\medskip

\noindent  Cantala D. (2004). Restabilizing matching markets at senior level. Games Econ. Behav. 48, 1-17.
\medskip

\noindent  Cantala D. (2011). Agreement toward stability in matching markets. Rev. Econ. Design 15 (4), 293–316.
\medskip

\noindent  Gale D., Shapley L. (1962). Collage admissions and stability of  marriage. Amer. Math. Monthly 69, 9-15.
\medskip

\noindent  Kelso A.S., Crawford V.P., (1982). Job matching, coalition formation, and gross substitutes. Econometrica 
50, 1483-1504.
\medskip

 \noindent Ma\'inez R., Mass\'o J., Neme A., Oviedo J., (2000). Single agents and the set of many-to-one stable matchings. J. Econ.  Theory 91, 91-105.
\medskip

\noindent Mart\'inez R.,  Mass\'o J., Neme A., Oviedo J. (2001),  On The Lattice Structure of The Set of Stable Matchings for a Many-to-One Model. Optimization 50, 439-457.
\medskip

\noindent Roth, A. (1984). Stability and polarization of interests in job matching. Econometrica 52, 47–57.
\medskip

\noindent  Sotomayor M. (1996). A non-constructive elementary proof of the existence of stable marriages. Games Econ. Behav. 13, 135–137.
\medskip

\noindent  Wu Q., Roth (2018). The lattice of envy-free matchings. Games Econ. Behav. 109, 201–211.

\end{document}